\documentclass[11pt,a4paper]{article}

\usepackage{amsmath,amsfonts, amssymb}
\setlength{\parindent}{0pt}
\setlength{\parskip}{1.5\parsep}
\setlength{\oddsidemargin}{0in}
\setlength{\evensidemargin}{0pt}
\setlength{\textwidth}{6.2in}

\begin{document}
\bibliographystyle{abbrv}

\title{On the entropy and log-concavity of compound Poisson measures}
\author{
Oliver Johnson\thanks{Department of Mathematics, University of Bristol, University Walk, Bristol, BS8 1TW, UK.
Email: \texttt{O.Johnson@bristol.ac.uk}}
\and 
Ioannis Kontoyiannis\thanks{Department of Informatics,
Athens University of Economics \& Business,
Patission 76, Athens 10434, Greece.
Email: {\tt yiannis@aueb.gr}}
\and
Mokshay Madiman\thanks{Department of Statistics, Yale University,
24 Hillhouse Avenue, New Haven, CT 06511, USA.
Email: \texttt{mokshay.madiman@yale.edu}}
}
\date{\today}
\maketitle

\newtheorem{theorem}{Theorem}[section]
\newtheorem{lemma}[theorem]{Lemma}
\newtheorem{proposition}[theorem]{Proposition}
\newtheorem{corollary}[theorem]{Corollary}
\newtheorem{conjecture}[theorem]{Conjecture}
\newtheorem{definition}[theorem]{Definition}
\newtheorem{example}[theorem]{Example}
\newtheorem{condition}{Condition}
\newtheorem{main}{Theorem}
\newtheorem{remark}[theorem]{Remark}
\hfuzz25pt

\def \outlineby #1#2#3{\vbox{\hrule\hbox{\vrule\kern #1%
\vbox{\kern #2 #3\kern #2}\kern #1\vrule}\hrule}}%
\def \endbox {\outlineby{4pt}{4pt}{}}%
\newenvironment{proof}
{\noindent{\bf Proof\ }}{{\hfill \endbox
}\par\vskip2\parsep}
\newenvironment{pfof}[2]{\removelastskip\vspace{6pt}\noindent
 {\it Proof  #1.}~\rm#2}{\par\vspace{6pt}}

\newcommand{\vc}[1]{{\mathbf{ #1}}}
\newcommand{\bp}[1]{b_{\vc{#1}}}
\newcommand{\cp}[1]{C_Q b_{\vc{#1}}}
\newcommand{\cbin}[1]{C_Q {\rm Bin}_{#1}}
\newcommand{\cPsa}{C_Q P_{\alpha}^{\#}}
\newcommand{\cP}{C_Q P}
\newcommand{\cPa}{C_Q P_{\alpha}}
\newcommand{\cPx}{C_Q P_{X}}
\newcommand{\cPy}{C_Q P_{Y}}
\newcommand{\cPz}{C_Q P_{X+Y}}
\newcommand{\sco}[1]{r_{1,#1}}
\newcommand{\Pc}{{\mathcal{ P}}}
\newcommand{\wt}[1]{\widetilde{#1}}
\newcommand{\var}{{\rm{Var\;}}}
\newcommand{\cov}{{\rm{Cov\;}}}
\newcommand{\tends}{\rightarrow \infty}
\newcommand{\C}{{\cal C}}
\newcommand{\ep}{{\mathbb {E}}}
\newcommand{\pr}{{\mathbb {P}}}
\newcommand{\re}{{\mathbb {R}}}
\newcommand{\I}{\mathbb {I}}
\newcommand{\Z}{{\mathbb {Z}}}
\newcommand{\tq}{\widetilde{Q}}
\newcommand{\tc}{\widetilde{C}}
\newcommand{\qs}[1]{Q^{\# (#1)}}
\newcommand{\qst}[1]{Q^{* #1}}
\newcommand{\qsh}{Q^{\#}}
\newcommand{\Nat}{{\mathbb {N}}}
\newcommand{\map}[1]{{\bf #1}}
\newcommand{\CPi}{C_Q \Pi}
\newcommand{\pbar}{\vc{\overline{p}}}

\newcommand{\leqa}{\mbox{$ \;\stackrel{(a)}{\leq}\; $}}

\newcommand{\Ch}[2]{\ensuremath{\begin{pmatrix} #1 \\ #2 \end{pmatrix}}} 
\newcommand{\bin}[2]{\binom{#1}{#2}}
\newcommand{\ov}[1]{\overline{#1}}
\newcommand{\bern}[1]{{\rm{Bern}}\left(#1\right)}
\newcommand{\cbern}[2]{{\rm{CBern}}\left(#1,#2\right)}
\newcommand{\geo}[1]{{\rm{Geom}}\left(#1\right)}
\newcommand{\Prob}[1]{\ensuremath{\mathbb{P} \left(#1 \right)}}
\newcommand{\blah}[1]{}
\newcommand{\ch}[1]{{\bf #1}}

\newcommand{\equald}{\mbox{$ \;\stackrel{\cal D}{=}\; $}}

\begin{abstract} 
Motivated, in part, by the desire to develop
an information-theoretic foundation for 
compound Poisson approximation limit theorems
(analogous to the corresponding developments
for the central limit theorem and for simple
Poisson approximation), this work examines
sufficient conditions under which the compound 
Poisson distribution has maximal entropy within 
a natural class of probability measures on the 
nonnegative integers. We show that
the natural analog of the Poisson 
maximum entropy property remains valid if the
measures under consideration are log-concave,
but that it fails in general.
A parallel maximum entropy result is established
for the family of compound binomial measures.
The proofs are largely based on ideas 
related to the semigroup approach introduced in 
recent work by Johnson~\cite{johnson21} for the
Poisson family.
Sufficient conditions are given for compound 
distributions to be log-concave, and specific 
examples are presented illustrating all the above 
results. 

\end{abstract}

\bigskip

{\bf Keywords}

\newpage

\section{Introduction}

A particularly appealing way to state the 
classical central limit theorem is to say that,
if $X_1,X_2,\ldots$ 
are independent and identically distributed,
continuous random variables with zero mean and unit variance, 
then the entropy of their normalized partial sums
$S_n=\frac{1}{\sqrt{n}}\sum_{i=1}^nX_i$ increases 
with $n$ to
the entropy of the standard normal distribution, 
which is maximal among all random variables with
zero mean and unit variance. More precisely, if
$f_n$ denotes the density of $S_n$ and $\phi$ 
the standard normal density, then, as $n\to\infty$,
\begin{equation}
h(f_n)\uparrow h(\phi)
=\sup\{h(f)\;:\;\mbox{densities $f$ with mean 0 and
variance 1}\},
\label{eq:clt}
\end{equation}
where $h(f)=-\int f\log f$ denotes the differential
entropy and log denotes the natural logarithm.
Precise conditions under which 
(\ref{eq:clt}) holds are given in 
\cite{artstein}\cite{tulino}\cite{madiman};
also see
\cite{linnik}\cite{barron}\cite{johnson14}
and the references therein, where numerous
related results are stated, along with
their history.

Part of the appeal of this formalization of the
central limit theorem comes from its analogy
to the second law of thermodynamics: The
``state'' (meaning the distribution)
of the random variables $S_n$ evolves
monotonically, until the {\em maximum entropy}
state, the standard normal distribution, is
reached. Moreover, the introduction of
information-theoretic ideas and techniques
in connection with the entropy has motivated 
numerous related results (and their proofs),
generalizing and strengthening the central
limit theorem in different directions; see
the references mentioned above for details.

The classical Poisson convergence limit theorems, of which
the binomial-to-Poisson is the prototypical 
example, have also been examined under a similar light. 
An analogous program has been recently carried out in
this case 
\cite{shepp}\cite{johnstone}\cite{harremoes}\cite{johnson11}\cite{johnson21}.
The starting point is the identification
of the Poisson distribution as that 
which has maximal entropy within a natural
class of probability measures.
Perhaps the simplest way to state and
prove this is along the following lines; first we
make some simple definitions:
\begin{definition}
For any parameter vector $\vc{p} = (p_1,p_2, \ldots, p_n)$ 
with each $p_i\in[0,1]$,
the sum of independent Bernoulli random variables $B_i\sim\bern{p_i}$,
$$S_n=\sum_{i=1}^n B_i,$$
is called a {\em Bernoulli sum}, and its 
probability mass function is denoted by 
$\bp{p}(x):=\Pr\{S_n=x\}$,
for $x=0,1,\ldots$. Further, for each $\lambda>0$, we define
the following sets of parameter vectors: 
\begin{eqnarray*}
\Pc_n(\lambda) & = & \big\{ \vc{p}\in[0,1]^n
\;:\; p_1+p_2+\cdots+p_n =\lambda 
\big\}
\;\;\;\;
\mbox{and}
\;\;\;\;
\Pc_{\infty}(\lambda) = \bigcup_{n\geq 1} \Pc_n(\lambda).
\end{eqnarray*}
\label{def:bp}
\end{definition}
Shepp and Olkin \cite{shepp} showed 
that, for fixed $n\geq1$,
the Bernoulli sum $\bp{p}$ which has maximal 
entropy among all Bernoulli sums with
mean $\lambda$,
is Bin$(n,\lambda/n)$,
the binomial with parameters $n$ and $\lambda/n$,
\begin{equation}
H(\mbox{Bin}(n,\lambda/n))
=
\max\Big\{ H(\bp{p})\;:\; {\vc{p}\in\Pc_n(\lambda)}\Big\},
\label{eq:maxEntB}
\end{equation}
where $H(P)=-\sum_x P(x)\log P(x)$ denotes the discrete
entropy function. Noting that the binomial
$\mbox{Bin}(n,\lambda/n)$ converges to the Poisson
distribution $\mbox{Po}(\lambda)$ as $n\to\infty$,
and that the classes of Bernoulli sums in (\ref{eq:maxEntB})
are nested, 
$\{\bp{p}:\vc{p}\in\Pc_n(\lambda)\}\subset
\{\bp{p}:\vc{p}\in\Pc_{n+1}(\lambda)\},$ 
Harremo\"es \cite{harremoes} 
noticed that a simple limiting
argument gives the following 
maximum entropy property
for the Poisson distribution:
\begin{equation}
H(\mbox{Po}(\lambda))
=
\sup\Big\{
H(\bp{p})\;:\;
\vc{p}\in\Pc_\infty(\lambda)\Big\}.
\label{eq:maxEntP}
\end{equation}

Partly motivated by the desire to provide
an information-theoretic foundation for 
{\em compound Poisson limit theorems}
and the
more general problem of {\em compound Poisson 
approximation}, as a first step we consider
the problem of generalizing
the maximum entropy
properties (\ref{eq:maxEntB})
and (\ref{eq:maxEntP}) to the case of
{\em compound Poisson} distributions
on ${\mathbb Z}_+$.\footnote{Recall
	that the compound Poisson distributions 
	are the only infinitely divisible distributions 
	on ${\mathbb Z}_+$, and also
	they are (discrete) stable laws
	\cite{steutel2}.
	In the way of motivation we also
	recall Gnedenko and Korolev's
	remark that ``there should be mathematical ...
	probabilistic models of the universal principle
	of non-decrease of uncertainty,''
	and their proposal that we should
	``find conditions under which certain limit 
	laws appearing in limit theorems of probability 
	theory possess extremal entropy properties. Immediate 
	candidates to be subjected to such analysis are, 
	of course, stable laws \ldots'';
	see \cite[pp. 211-215]{gnedenko2}.
	}
We begin with some definitions:

\begin{definition} Let $P$ be an arbitrary distribution
on $\Z_+=\{0,1,\ldots\}$, and $Q$ a distribution on 
$\Nat = \{1, 2, \ldots \}$.
The {\em $Q$-compound distribution $\cP$} is the
distribution of the random sum,
\begin{equation} \label{eq:randsum}
\sum_{j=1}^{Y} X_j,
\end{equation}
where $Y$ has distribution $P$ and the random variables 
$\{X_j\}$ are independent and identically distributed
(i.i.d.) with common distribution $Q$ and 
independent of $Y$.
The distribution $Q$ is called a
{\em compounding distribution},
and the map $P\mapsto C_Q P$ is
the {\em $Q$-compounding operation}.
The $Q$-compound distribution $C_QP$
can be explicitly written as the mixture,
\begin{equation} 
\label{eq:compdis}
 \cP(x) = \sum_{y=0}^{\infty} P(y) \qst{y}(x),
	\;\;\;\;x\geq 0,
\end{equation}
where $Q^{*j}(x)$ is the $j$th convolution power of $Q$ and 
$Q^{*0}$ is the point mass at $x=0$. 
\end{definition}

Above and throughout the paper,
the empty sum $\sum_{j=1}^0(\cdots)$ is taken to be zero;
all random variables considered are supported 
on $\Z_+=\{0,1,\ldots\}$; and all compounding 
distributions $Q$ are supported on $\Nat=\{1,2,\ldots\}$.

\begin{example} Let $Q$ be an arbitrary distribution on $\Nat$.
\begin{enumerate} 
	\item 
For any $0 \leq p \leq 1$, the {\em compound Bernoulli
distribution $\cbern{p}{Q}$} is the distribution
of the product $BX$, where $B\sim\mbox{Bern}(p)$
and $X\sim Q$ are independent.
It has probability mass function
$C_Q P$, where $P$ is the $\bern{p}$ mass function,
so that, $C_Q P(0)=1-p$ and $C_Q P(x)=pQ(x)$ for $x\geq 1$.
	\item 
A {\em compound Bernoulli sum} is a sum of independent 
compound Bernoulli random variables, all with respect 
to the same compounding distribution $Q$: Let 
$X_1,X_2,\ldots,X_n$ be i.i.d.\ with common
distribution $Q$ and $B_1,B_2,\ldots,B_n$
be independent Bern($p_i$). We call,
$$ \sum_{i=1}^n B_iX_i \;\equald\; \sum_{j=1}^{\sum_{i=1}^n B_i} X_j,$$
a {\em compound Bernoulli sum}; in view of~{\em (\ref{eq:randsum})},
its distribution is $\cp{p}$, where
$\vc{p} = (p_1,p_2, \ldots, p_n)$.
	\item
In the special case of a compound Bernoulli sum with
all its parameters $p_i=p$ for a fixed $p\in[0,1]$,
we say that it has a {\em compound binomial distribution},
denoted by $\mbox{\em CBin}(n,p,Q)$.
	\item
Let $\Pi_\lambda(x)=e^{-\lambda}\lambda^x/x!$, $x\geq 0$,
denote the {\em Po}$(\lambda)$ mass function. Then,
for any $\lambda>0$,
the {\em compound Poisson distribution $\mbox{CPo}(\lambda,Q)$}
is the distribution with mass function $\CPi_\lambda$:
\begin{equation} \label{eq:cppmf}
\CPi_{\lambda}(x) =
\sum_{j=0}^{\infty} \Pi_\lambda(j)
Q^{*j}(x) = 
\sum_{j=0}^{\infty} \frac{ e^{-\lambda} \lambda^j}{j!}
Q^{*j}(x),
\;\;\;\;x\geq 0.
\end{equation}
\end{enumerate}
\end{example}

In view of the Shepp-Olkin maximum entropy property (\ref{eq:maxEntB})
for the binomial distribution, a first natural conjecture
might be that the compound binomial has maximum entropy
among all compound Bernoulli sums $\cp{p}$
with a fixed mean; that is, 
\begin{equation}
H(\mbox{CBin}(n,\lambda/n,Q))
=
\max\Big\{ H(C_Q\bp{p})\;:\; {\vc{p}\in\Pc_n(\lambda)}\Big\}.
\label{eq:maxEntBC}
\end{equation}
But, perhaps somewhat surprisingly, as Chi \cite{chi} 
has noted, (\ref{eq:maxEntBC}) fails in general. For example,
taking $Q$ to be the uniform distribution on $\{1,2\}$,
$\vc{p}=(0.00125, 0.00875)$
and $\lambda =p_1+p_2=0.01$, 
direct computation shows that,
\begin{equation}
H(\mbox{CBin}(2,\lambda/2,Q))
<0.090798
<0.090804
< H(C_Q\bp{p}).
\label{eq:chi}
\end{equation}

As the Shepp-Olkin result (\ref{eq:maxEntB})
was only seen as an intermediate step in proving
the maximum entropy property of the Poisson 
distribution (\ref{eq:maxEntP}), we may still
hope that the corresponding result remains
true for compound Poisson measures,
namely that,
\begin{equation}
H(\mbox{CPo}(\lambda,Q))
=
\sup\Big\{
H(C_Q\bp{p})\;:\;
\vc{p}\in\Pc_\infty(\lambda)\Big\}.
\label{eq:maxEntPC}
\end{equation}
Again, (\ref{eq:maxEntPC}) fails in general. 
For example, taking the same 
$Q,\lambda$ and $\vc{p}$ as above,
yields,
$$
H(\mbox{CPo}(\lambda,Q))
<0.090765
<0.090804
< H(C_Q\bp{p}).$$

The main purpose of the present work
is to show that, despite these negative
results, it is possible to provide
natural, broad sufficient conditions,
under which the compound binomial and 
compound Poisson distributions can be
shown to have maximal entropy in an
appropriate class of measures.
Our first result, Theorem~\ref{thm:mainber}
below, states that (\ref{eq:maxEntBC})
{\em does} hold, under certain
conditions on $Q$ and CBin($n,\lambda,Q$):

\begin{theorem} \label{thm:mainber}
If the distribution $Q$ on $\Nat$ 
and the compound binomial distribution 
$\mbox{\em CBin}(n,\lambda/n,Q)$
are both log-concave, 
then,
$$H(\mbox{\em CBin}(n,\lambda/n,Q))
=\max\Big\{ H(C_Q\bp{p})\;:\; {\vc{p}\in\Pc_n(\lambda)}\Big\},$$
as long as the tail of $Q$ satisfies
either one of the following properties:
$(a)$~$Q$ has finite support; or 
$(b)$~$Q$ has tails heavy enough so that,
for some $\rho,\beta>0$ and $N_0\geq 1$, 
we have, $Q(x)\geq \rho^{x^\beta}$,
for all $x\geq N_0$.
\end{theorem}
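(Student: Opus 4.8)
The plan is to show that the map $\vc{p}\mapsto H(\cp{p})$ is Schur-concave on the simplex $\mathcal P_n(\lambda)$; since this simplex is symmetric and its majorization-minimal point is $\vc{p}=(\lambda/n,\ldots,\lambda/n)$, Schur-concavity would place the maximum exactly at $\mbox{CBin}(n,\lambda/n,Q)$. By the Schur--Ostrowski criterion it suffices to fix two coordinates $i\neq j$, write $p_i=a+\epsilon$ and $p_j=a-\epsilon$ with $a=(p_i+p_j)/2$, and show that $H$ is non-increasing in $t:=\epsilon^2$. Factoring out the untouched coordinates, I would write the full law as $P_\epsilon = R * M_\epsilon$, where $R$ is the compound Bernoulli sum over the $n-2$ passive coordinates and $M_\epsilon = [C_Q\mathrm{Bern}(a+\epsilon)]*[C_Q\mathrm{Bern}(a-\epsilon)]$ captures the two active ones.

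The computational heart is a clean derivative identity. A short calculation gives $M_\epsilon = c_0\delta_0 + c_1 Q + c_2 Q^{*2}$ with $c_0=(1-a)^2-\epsilon^2$, $c_1 = 2a(1-a)+2\epsilon^2$ and $c_2 = a^2-\epsilon^2$; since each $c_k$ depends on $\epsilon$ only through $t=\epsilon^2$, one finds the remarkably simple $\frac{d}{dt}M_\epsilon = -\delta_0 + 2Q - Q^{*2} = -(\delta_0-Q)^{*2}$, hence $\frac{d}{dt}P_\epsilon = -(\delta_0-Q)^{*2}*R$. Because the signed measure $(\delta_0-Q)^{*2}$ has total mass zero, differentiating $H$ and summing by parts twice yields
\begin{equation*}
\frac{dH(P_\epsilon)}{dt} = \sum_{x\geq 0} R(x)\, \ep\Big[\log P_\epsilon(x+Y_1+Y_2) - \log P_\epsilon(x+Y_1) - \log P_\epsilon(x+Y_2) + \log P_\epsilon(x)\Big],
\end{equation*}
where $Y_1,Y_2$ are i.i.d.\ with law $Q$; that is, a $Q$-averaged \emph{mixed second difference} of the sequence $\log P_\epsilon$, weighted by the nonnegative $R$.

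The sign is then decided by log-concavity. If $P_\epsilon$ is log-concave, then $\log P_\epsilon$ is a concave sequence, so its increment $\log P_\epsilon(\cdot+s)-\log P_\epsilon(\cdot)$ is non-increasing for every $s>0$; taking $s=Y_1$ shows that each mixed second difference (with strictly positive increments $Y_1,Y_2\geq 1$) is $\leq 0$, whence $\frac{d}{dt}H(P_\epsilon)\leq 0$. Thus equalizing any pair cannot decrease the entropy, which is exactly the Schur-concavity sought; iterating the averaging (or invoking the inequality at a maximizer, which exists by compactness) drives the optimum to the symmetric vector. Notably, log-concavity of $Q$ plays no role in fixing this sign — it enters, together with the assumed log-concavity of $\mbox{CBin}(n,\lambda/n,Q)$, only to \emph{guarantee} that the laws $P_\epsilon$ arising along the equalization are themselves log-concave, via the paper's preservation results for log-concavity under convolution and compounding.

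I expect the main obstacle to be precisely this log-concavity bookkeeping: the sign identity needs $P_\epsilon$ log-concave for \emph{every} intermediate $\vc{p}$, not merely at the endpoint $\mbox{CBin}$, yet a single factor $C_Q\mathrm{Bern}(p)=(1-p)\delta_0+pQ$ fails log-concavity when $p$ is small — exactly the regime of Chi's counterexample (\ref{eq:chi}). Reconciling this (by ordering the equalization steps, or by an interior-maximizer argument that only invokes concavity of $\log P$ at the optimum) is the delicate point. The remaining difficulty is purely analytic and is where the tail hypotheses enter: under $(a)$ every law has finite support and all sums are finite, while under $(b)$ the lower bound $Q(x)\geq\rho^{x^\beta}$ forces $P_\epsilon(x)\geq c\,\rho^{x^\beta}$, so $|\log P_\epsilon(x)|=O(x^\beta)$, which makes the rearranged sums absolutely convergent and annihilates the boundary terms from the two summations by parts, thereby justifying the interchange of $\frac{d}{dt}$ with $\sum_x$ and the displayed identity.
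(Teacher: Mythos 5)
Your derivative identity is essentially the paper's Lemma~\ref{lem:partials} (your parametrization $t=\epsilon^2$ just absorbs the factor $-2t$), and your treatment of the tail conditions $(a)$/$(b)$ matches what the paper does. The gap is in the sign step, and it is fatal to the route you chose. Because you differentiate the entropy itself, the mixed second differences that appear are those of $\log P_\epsilon$, the \emph{moving} law, so you need every intermediate law $\cp{p_t}$, along every equalization path, to be log-concave. The theorem assumes log-concavity only of $Q$ and of the single endpoint $\mbox{CBin}(n,\lambda/n,Q)$, and this does not propagate to the intermediates: a factor $C_Q\mathrm{Bern}(p)$ with small $p$ violates log-concavity (Lemma~\ref{lem:lc}(i)), and convolving it with the passive coordinates need not repair this. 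You flag the problem yourself, but neither of your proposed repairs closes it: an interior-maximizer argument gives only the first-order condition $dH/dt=0$ at an unknown maximizer $\vc{p}^*$, which yields nothing unless you can sign the derivative along the whole segment joining $\vc{p}^*$ to $\pbar$ --- and that again requires log-concavity of laws you have no control over; and no ordering of the equalization steps keeps all intermediate laws log-concave in general. Note also that what you are trying to prove (Schur-concavity of $\vc{p}\mapsto H(\cp{p})$) is strictly stronger than the theorem, and Chi's example~(\ref{eq:chi}) shows that an equalization step can strictly \emph{decrease} entropy, so this monotonicity is simply false without control of the intermediate laws.

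The paper's proof keeps your entire computational skeleton but makes one decisive change: it differentiates not $H(\cp{p_t})$ but the cross-entropy against the \emph{fixed} target, $E(t):=-\sum_x \cp{p_t}(x)\log\cp{\pbar}(x)$. Since $\sum_x \frac{\partial}{\partial t}\cp{p_t}(x)=0$, the same summation-by-parts rearrangement expresses $E'(t)$ as an $R$-weighted, $Q$-averaged mixed second difference of $\log\cp{\pbar}$, whose sign is controlled exactly by the \emph{assumed} log-concavity of $\mbox{CBin}(n,\lambda/n,Q)$, with no reference whatsoever to the intermediate laws (Proposition~\ref{prop:deriv2}). Iterating equalizations and using permutation invariance gives $-\sum_x \cp{p}(x)\log\cp{\pbar}(x)\leq H(\cp{\pbar})$, and the theorem then follows from nonnegativity of relative entropy:
\begin{equation*}
H(\cp{p})\;\leq\; H(\cp{p})+D(\cp{p}\,\|\,\cp{\pbar})
\;=\;-\sum_{x=0}^{\infty}\cp{p}(x)\log\cp{\pbar}(x)
\;\leq\; H(\cp{\pbar}).
\end{equation*}
As a bonus, your analytic work under $(a)$/$(b)$ becomes easier in this formulation, since the bound $-\log\cp{\pbar}(x)\leq Cx^{\beta}$ concerns only the fixed reference measure rather than the moving law.
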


The proof of the theorem is given in Section~\ref{sec:compbin}.
As can be seen there,
conditions $(a)$ and $(b)$ are introduced 
purely for technical reasons, and can probably
be significantly relaxed. The notion of log-concavity,
on the other hand, is central in the development
of the ideas in this work. [In a different setting,
log-concavity also appears as a natural condition
for a different maximum entropy problem considered
by Cover and Zhang \cite{cover2}.] Recall that 
the distribution $P$ of a random variable $X$ on $\Z_+$
is {\em log-concave} if its support is a (possibly infinite)
interval of successive integers in $\Z_+$, and,
\begin{equation} \label{eq:lcdef}
P(x)^2 \geq P(x+1) P(x-1),\;\;\;\; \mbox{for all $x \geq 1$.}
\end{equation}
We also recall that most of the 
commonly used distributions appearing 
in applications (e.g.,
the Poisson, binomial, geometric, negative binomial, hypergeometric
logarithmic series, or Polya-Eggenberger distribution)
are log-concave.

Another key property is that of
ultra log-concavity;
cf.\ \cite{pemantle}. The distribution $P$ of a random variable 
$X$ is {\em ultra log-concave} if $P(x)/\Pi_{\lambda}(x)$ is 
log-concave, that is, if,
\begin{equation} \label{eq:ulcdef}
x P(x)^2 \geq (x+1) P(x+1) P(x-1),\;\;\;\; \mbox{for all $x \geq 1$.}
\end{equation}
Note that the Poisson distribution as well as all Bernoulli sums 
are ultra log-concave.

Johnson \cite{johnson21} recently
proved the following maximum entropy 
property for the Poisson distribution,
generalizing (\ref{eq:maxEntP}):
\begin{equation}
H(\mbox{Po}(\lambda))
=
\max\Big\{
H(P)\;:\;
\mbox{ultra log-concave $P$ with mean $\lambda$}
\Big\}.
\label{eq:maxEntPJ}
\end{equation}
Our next result (proved in Section~\ref{sec:comppoi})
states that,
as long as $Q$ and the compound Poisson measure
$\mbox{CPo}(\lambda,Q)$ are log-concave, 
the same maximum entropy statement as in 
(\ref{eq:maxEntPJ}) remains valid 
in the compound Poisson case:

\begin{theorem} \label{thm:mainpoi}
If the distribution $Q$ on $\Nat$ and 
and the compound Poisson distribution
$\mbox{\em CPo}(\lambda,Q)$ are both log-concave, 
then,
$$ 
H(\mbox{\em CPo}(\lambda,Q))
=\max\Big\{
H(C_Q P) \;:\; \mbox{ultra log-concave $P$ with mean $\lambda$}
\Big\}.$$
\end{theorem}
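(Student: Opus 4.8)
The plan is to prove this via a semigroup/entropy-dissipation argument, mirroring the structure that Johnson~\cite{johnson21} used to establish the ultra log-concave maximum entropy property~(\ref{eq:maxEntPJ}) for $\mbox{Po}(\lambda)$, and then transporting that result through the $Q$-compounding operation $C_Q$. The key observation is that the claimed maximum is taken over $Q$-compounds $C_Q P$ where $P$ ranges over ultra log-concave measures of mean $\lambda$, and that $\mbox{CPo}(\lambda,Q) = C_Q \Pi_\lambda$. So the natural strategy is to build a Markov semigroup acting on the parameter space that fixes $\Pi_\lambda$ as its unique stationary point within the ultra log-concave class, show that the entropy $H(C_Q P_t)$ of the compounded process is monotone along the flow, and conclude that $H(C_Q \Pi_\lambda)$ dominates $H(C_Q P)$ for every admissible $P$.

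First I would set up the interpolating family. Following~\cite{johnson21}, one interpolates between an arbitrary ultra log-concave $P$ with mean $\lambda$ and the Poisson $\Pi_\lambda$ by a semigroup $(P_t)$ that preserves the mean $\lambda$ and preserves ultra log-concavity for all $t\geq 0$, converging to $\Pi_\lambda$ as $t\to\infty$. The heart of the matter is to differentiate $H(C_Q P_t)$ in $t$ and to identify the sign of the derivative. Because $C_Q$ is a linear (mixing) operation, I would express $\frac{d}{dt} H(C_Q P_t)$ in terms of the generator of the semigroup and a scaled score-function or Fisher-information-type quantity for the compound measure. The goal is to rewrite this derivative as an expression manifestly of one sign — nonnegative, so that $H(C_Q P_t)$ increases to its limiting value $H(C_Q \Pi_\lambda) = H(\mbox{CPo}(\lambda,Q))$. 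This is exactly where the log-concavity hypotheses on both $Q$ and $\mbox{CPo}(\lambda,Q)$ enter: they are what guarantee that the relevant covariance/correlation term coming from the compounding has the correct sign, so that passing the monotonicity through $C_Q$ does not destroy it.

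The main obstacle I expect is precisely this sign control after compounding. For the bare Poisson result the monotonicity follows from ultra log-concavity of $P_t$ alone, but here the entropy functional is evaluated on $C_Q P_t$ rather than on $P_t$, and the derivative picks up the Jacobian of the compounding map together with the local geometry of $\qst{y}$. Controlling this requires showing that log-concavity of $Q$ is inherited in the right way under convolution powers and mixing, and that log-concavity of the target $\mbox{CPo}(\lambda,Q)$ supplies the missing monotonicity of the compound score function. Concretely, I anticipate needing a lemma asserting that if $Q$ and $\mbox{CPo}(\lambda,Q)$ are log-concave then a suitable size-biased or scaled comparison between $C_Q P_t$ and $\mbox{CPo}(\lambda,Q)$ has constant sign, and plugging this into the entropy-derivative identity. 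Establishing that lemma, and verifying that the integration-by-parts (Abel summation) manipulations used to produce the one-signed derivative remain valid — with no boundary terms at infinity — under the log-concavity assumptions, is the technically delicate core of the argument.

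Finally I would handle the limit $t\to\infty$ and the attainment of the maximum. Since $P_t\to\Pi_\lambda$ pointwise with controlled tails (here the log-concavity of $Q$ and of the compound measure should furnish the uniform integrability needed), I would pass to the limit to get $H(C_Q P_t)\uparrow H(\mbox{CPo}(\lambda,Q))$, and combine this with the established monotonicity $H(C_Q P)=H(C_Q P_0)\leq \lim_{t\to\infty} H(C_Q P_t)$ to conclude that $\mbox{CPo}(\lambda,Q)$ achieves the supremum. Equality is immediate since $\Pi_\lambda$ is itself ultra log-concave with mean $\lambda$, so $\mbox{CPo}(\lambda,Q)=C_Q\Pi_\lambda$ lies in the feasible set, giving the stated maximum rather than merely a supremum.
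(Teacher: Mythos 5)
Your overall architecture (a Johnson-style semigroup interpolation, a score-function/covariance sign argument, transport through $C_Q$) matches the paper's, but at the decisive step you differentiate the wrong functional, and this is a genuine gap. You propose to show that the entropy $H(C_Q P_t)$ itself is monotone along the flow. If you write out that derivative and apply Abel summation, the covariance/rearrangement argument requires the function $x \mapsto \log C_Q P_t(x) - \sum_v Q(v) \log C_Q P_t(x+v)$ to be increasing, i.e.\ (essentially) that the \emph{evolving} compound measure $C_Q P_t$ be log-concave for every $t$. The theorem's hypotheses give log-concavity only of $Q$ and of the fixed target $\mbox{CPo}(\lambda,Q)$; they say nothing about the intermediate measures, and ultra log-concavity of $P_t$ does \emph{not} imply log-concavity of $C_Q P_t$ --- Section~\ref{sec:lccond} shows, for instance, that a compound Bernoulli measure $\cbern{p}{Q}$ is log-concave only when $p \geq (1+Q(1)^2/Q(2))^{-1}$, and condition~(\ref{eq:nec2}) is necessary in general. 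This is precisely where the compound case differs from the pure Poisson case of \cite{johnson21}: there, ultra log-concavity of $U_\alpha P$ (which the semigroup preserves) already makes the evolving measure log-concave, so an entropy-monotonicity argument has traction; after compounding it does not, and the auxiliary lemma you anticipate (a constant-sign comparison between $C_Q P_t$ and $\mbox{CPo}(\lambda,Q)$) cannot be extracted from the stated hypotheses.

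The paper's proof circumvents this by proving monotonicity not of the entropy but of the cross-entropy against the fixed target: with $W_\alpha \sim C_Q U_\alpha P$, Proposition~\ref{prop:deriv} shows that $E(\alpha) := E[-\log \CPi_{\lambda}(W_\alpha)]$ satisfies $E'(\alpha) \leq 0$. In the resulting covariance, the increasing factor is $\log \CPi_{\lambda}(x) - \sum_v Q(v)\log \CPi_{\lambda}(x+v)$, built from the \emph{target} measure, so the assumed log-concavity of $\mbox{CPo}(\lambda,Q)$ is exactly what is needed (the decreasing factor is the score $\sco{C_Q U_\alpha P}$, via Lemma~\ref{lem:decsc}). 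The theorem then follows with no $t\to\infty$ limit and no entropy-continuity argument at all: since $W_1 \sim C_Q P$ and $W_0 \sim \CPi_\lambda$,
$$H(C_Q P) \;\leq\; H(C_Q P) + D(C_Q P\|\CPi_\lambda) \;=\; E(1) \;\leq\; E(0) \;=\; H(\CPi_\lambda).$$
Note also that your final step would require continuity of entropy along $C_Q P_t \to \CPi_\lambda$ on an infinite alphabet, a nontrivial uniform-integrability issue that the cross-entropy formulation sidesteps: there one only needs the polynomial bound $-\log\CPi_\lambda(x) \leq Cx^2$ together with the third-moment bounds of Lemma~\ref{lem:moments}. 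To repair your proposal, replace $H(C_Q P_t)$ by $E[-\log\CPi_\lambda(W_t)]$ throughout; the rest of your outline then aligns with the paper's argument.
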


In Section~\ref{sec:lccond} we give conditions under which 
the compound Poisson and compound Bernoulli distributions 
are log-concave. In particular, the results there
imply the following explicit maximum entropy statements.

\begin{example}
\begin{enumerate}
\item
Let $Q$ be an arbitrary log-concave distribution
on ${\mathbb N}$. Then Lemma~\ref{lem:lc} 
combined with Theorem~\ref{thm:mainber} implies that
the maximum entropy property of the
compound binomial distribution in
equation~{\em (\ref{eq:maxEntBC})} holds,
for all $\lambda$ large enough. That is,
the compound binomial 
{\em CBin($n,\lambda/n,Q$)} has maximal entropy
among all compound Bernoulli sums $C_Q\bp{p}$
with $p_1+p_2+\cdots+p_n=\lambda$, as long
as $\lambda \geq \frac{nQ(2)}{Q(1)^2+Q(2)}$.
\item
Suppose $Q$ is supported on $\{1,2\}$, 
with probabilities $Q(1)=q,Q(2)=1-q$,
and consider the class of all Bernoulli sums
$\bp{p}$ with mean $p_1+p_2+\cdots+p_n=\lambda$.
Theorem~\ref{thm:q2pt} combined with
Theorem~\ref{thm:mainpoi} implies that the
compound Poisson maximum entropy property
{\em (\ref{eq:maxEntPC})} holds in this case,
as long as $\lambda$
is large enough. In other words,
the distribution {\em CPo($\lambda,Q$)} has
maximal entropy 
among all compound Bernoulli sums $C_Q\bp{p}$
with $p_1+p_2+\cdots+p_n=\lambda\geq
\frac{2(1-q)}{q^2}$.
\item
Suppose $Q$ is geometric with parameter $\alpha\in(0,1)$,
i.e., $Q(x)=\alpha(1-\alpha)^{x-1}$ for all $x\geq 1$,
and again consider the class of  a Bernoulli sums $\bp{p}$
with mean $\lambda$.
Then Theorem~\ref{thm:qgeom} combined with
Theorem~\ref{thm:mainpoi} implies that 
{\em (\ref{eq:maxEntPC})} holds for all large $\lambda$:
The compound Poisson 
distribution {\em CPo($\lambda,Q$)} has
maximal entropy 
among all compound Bernoulli sums $C_Q\bp{p}$
with $p_1+p_2+\cdots+p_n=\lambda\geq
\frac{2(1-\alpha)}{\alpha}$.
\end{enumerate}
\end{example}

Clearly, it remains an open question to give {\em necessary} 
and sufficient conditions on $\lambda$ and $Q$ for the compound 
Poisson and compound binomial distributions to have maximal 
entropy within an appropriately defined class, or even
for the compound Poisson distribution to be log-concave.
Section~4 ends with a conjecture, together with some
supporting evidence, stating that CPo$(\lambda,Q)$ is
log-concave when $Q$ is log-concave and $\lambda Q(1)^2\geq 2Q(2)$.

\newpage

\section{Maximum Entropy Property of the Compound Poisson Distribution} 
\label{sec:comppoi}

Here we show that, if $Q$ and the
compound Poisson distribution 
$\mbox{CPo}(\lambda,Q)=C_Q\Pi_\lambda$ 
are both log-concave, then 
$\mbox{CPo}(\lambda,Q)$
has maximum entropy among all 
distributions of the form $\cP$, when $P$ has mean 
$\lambda$ and is ultra log-concave.
Our approach is an extension of the 
`semigroup' arguments of \cite{johnson21}.

We begin by recording some basic properties
of log-concave and ultra log-concave distributions:
\begin{itemize}
\item[$(i)$]
If $P$ is ultra log-concave, then
from the definitions it is immediate
that $P$ is log-concave.
\item[$(ii)$]
If $Q$ is log-concave, then it has finite moments
of all orders; see \cite[Theorem~7]{keilson}.
\item[$(iii)$]
If $X$ is a random variable
with ultra log-concave distribution $P$, then (by~$(i)$ 
and~$(ii)$) it has finite moments of all orders.
Moreover, considering the covariance between the decreasing 
function $P(x+1) (x+1)/P(x)$ and the increasing function
$x(x-1) \cdots (x-n)$, shows that the falling 
factorial moments 
of $P$ satisfy, 
$$E[(X)_n]:=E[X(X-1) \cdots (X-n+1)] \leq (E(X))^n;$$ 
see \cite{johnson21} and \cite{johnsonc2}
for details. 
\item[$(iv)$] 
The Poisson distribution and all Bernoulli
sums are ultra log-concave.
\end{itemize}

Recall the following definition 
from \cite{johnson21}:

\begin{definition} 
\label{def:stmap} 
%
Given $\alpha\in[0,1]$ and a random variable $X\sim P$ 
on $\Z_+$ with mean $\lambda\geq 0$,
let $U_\alpha P$ denote the 
distribution of the random variable,
$$\sum_{i=1}^X B_i+Z_{\lambda(1-\alpha)},$$
where the $B_i$ are i.i.d.\ $\bern{\alpha}$,
$Z_{\lambda(1-\alpha)}$ has distribution $\mbox{\em Po}(\lambda(1-\alpha))$,
and all random variables are independent
of each other and of $X$.
\end{definition}

Note that, if $X\sim P$ has mean $\lambda$,
then $U_\alpha P$ has the same mean. Also,
recall the following useful relation that
was established in 
Proposition~3.6 of \cite{johnson21}: For all $y\geq 0$,
\begin{equation} \label{eq:newheat}
\frac{\partial }{\partial \alpha} U_{\alpha}P(y)  = \frac{1}{\alpha} \left(
\lambda
(U_{\alpha}P(y) - U_{\alpha}P(y-1)
- ((y+1) U_{\alpha}P(y+1) - y U_{\alpha}P(y)) \right).
\end{equation}
%
Next we define another transformation of 
probability distributions $P$ on ${\mathbb Z}_+$: 

\begin{definition} \label{def:clustermap}
Given $\alpha\in[0,1]$, a distribution $P$ on $\Z_+$
and a compounding distribution $Q$ on $\Nat$, 
let $U^Q_{\alpha}P$ denote the
distribution $C_Q U_\alpha P$:
$$U^Q_{\alpha} P(x):=C_QU_\alpha P(x)
= \sum_{y=0}^{\infty} U_\alpha P(y) Q^{*y}(x),
\;\;\;\;x\geq 0.$$
\end{definition}

An important observation that will be at the
heart of the proof of Theorem~\ref{thm:mainpoi}
below is that, for $\alpha=0$, $U_0^QP$
is simply the compound Poisson
measure CP$(\lambda,Q)$, while for $\alpha=1$,
$U_1^QP=C_QP$. The following lemma, proved in the
appendix, gives a rough bound on the third
moment of $U_\alpha^QP$:

\begin{lemma} \label{lem:moments}
Suppose $P$ is an ultra log-concave 
distribution with mean $\lambda>0$ 
on ${\mathbb Z}_+$, and
let $Q$ be a log-concave compounding 
distribution on ${\mathbb N}$.
For each $\alpha\in[0,1]$,
let $W_\alpha,V_\alpha$ be random variables
with distributions $U_\alpha^QP=C_Q U_\alpha P$
and $C_Q (U_\alpha P)^{\#}$, respectively,
where, for any distribution $R$ with mean $\nu$, 
we write $R^{\#}(y) = R(y+1)(y+1)/\nu$ 
for its {\em size-biased} version.
Then the third moments 
$E(W_\alpha^3)$
and $E(V_\alpha^3)$ are both bounded above by,
$$\lambda q_3 +3\lambda^2q_1q_2+\lambda^3q_1^3,$$
where $q_1,q_2,q_3$ denote the first, second and third
moments of $Q$, respectively.
\end{lemma}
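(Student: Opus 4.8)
The plan is to bound the third moments $E(W_\alpha^3)$ and $E(V_\alpha^3)$ by first understanding $W_\alpha$ and $V_\alpha$ as compound random variables, and then exploiting the structure of the compounding operation together with the falling-factorial moment bound in property~$(iii)$ of the excerpt. Write $W_\alpha = \sum_{j=1}^{M} X_j$, where $M\sim U_\alpha P$ has mean $\lambda$, the $X_j$ are i.i.d.\ $\sim Q$, and all are independent. Since $V_\alpha$ is the compound of the size-biased version $(U_\alpha P)^{\#}$, I would treat it in exactly the same way with $M$ replaced by a random variable $M'\sim(U_\alpha P)^{\#}$ of mean $E[M']$. The key point is that the \emph{same} bound is claimed for both, so I expect the mean of the size-biased distribution to also be controlled by $\lambda$ (indeed, $U_\alpha P$ is ultra log-concave, so by property~$(iii)$ its size-biased mean $E[M']=E[M(M-1)]/\lambda + 1 \le \lambda+1$, and more carefully one checks $E[(M')_2]$ and $E[(M')_3]$ are dominated by the corresponding quantities for $M$).

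First I would compute $E(W_\alpha^3)$ by conditioning on $M$. Using the standard expansion of the moments of a random sum, the third moment of $\sum_{j=1}^M X_j$ can be written as a polynomial in the moments $q_1,q_2,q_3$ of $Q$ whose coefficients are the falling-factorial moments $E[(M)_1], E[(M)_2], E[(M)_3]$ of $M$. Concretely, expanding $(\sum_j X_j)^3$ and taking expectations, the terms group according to how many of the three factors share a common index: the all-distinct term contributes $E[(M)_3]\,q_1^3$, the two-equal terms contribute a multiple of $E[(M)_2]\,q_1 q_2$, and the all-equal term contributes $E[(M)_1]\,q_3 = \lambda q_3$. This is the routine calculation I would not grind through, but its \emph{shape} is exactly $\lambda q_3 + (\text{const})\,E[(M)_2]\,q_1q_2 + E[(M)_3]\,q_1^3$.

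The decisive step is then to invoke property~$(iii)$: since $U_\alpha P$ is ultra log-concave (it is a Bernoulli-thinning of $P$ convolved with a Poisson, and both operations preserve ultra log-concavity, and $U_\alpha P$ has mean $\lambda$), its falling-factorial moments satisfy $E[(M)_n]\le\lambda^n$. Substituting $E[(M)_2]\le\lambda^2$ and $E[(M)_3]\le\lambda^3$, and checking that the combinatorial constant on the middle term is exactly $3$, yields $E(W_\alpha^3)\le \lambda q_3 + 3\lambda^2 q_1 q_2 + \lambda^3 q_1^3$, as required. For $V_\alpha$ the identical computation applies with $M$ replaced by $M'$, so I must verify $E[(M')_1]\le\lambda$, $E[(M')_2]\le\lambda^2$, and $E[(M')_3]\le\lambda^3$; since size-biasing of an ultra log-concave law again has well-behaved factorial moments, the same bound emerges.

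\textbf{The main obstacle} I anticipate is precisely the $V_\alpha$ half: establishing that the size-biased distribution $(U_\alpha P)^{\#}$ has its first three falling-factorial moments bounded by $\lambda,\lambda^2,\lambda^3$. The relation $E[(M')_n] = E[(M)_{n+1}]/\lambda$ reduces this to bounding $E[(M)_2],E[(M)_3],E[(M)_4]$ for the ultra log-concave law $U_\alpha P$, all of which are $\le\lambda^2,\lambda^3,\lambda^4$ by property~$(iii)$ — but then $E[(M')_3]=E[(M)_4]/\lambda\le\lambda^3$ only gives the leading term, and I must confirm the full cubic expansion for $V_\alpha$ still collapses to the same upper bound rather than a strictly larger one. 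I expect this to work out because size-biasing shifts the index by one in a way that is exactly compensated by the extra factor of $\lambda$ in the denominator, but verifying the constants is where care is needed; everything else is bookkeeping, relying on the preservation of ultra log-concavity under $U_\alpha$ and the crucial inequality $E[(X)_n]\le(E X)^n$ recorded in property~$(iii)$.
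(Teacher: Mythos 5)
Your proposal is correct, and for $E(W_\alpha^3)$ it is exactly the paper's argument: expand the third moment of the random sum as $E(W_\alpha^3) = q_3\,E[M] + 3 q_1 q_2\, E[(M)_2] + q_1^3\, E[(M)_3]$ with $M\sim U_\alpha P$, note that $U_\alpha$ preserves ultra log-concavity so property~$(iii)$ gives $E[(M)_n]\le\lambda^n$, and substitute. Where you genuinely part ways with the paper is the $V_\alpha$ half. The paper proves a small structural fact --- size-biasing preserves ultra log-concavity, checked via the ratio $R^{\#}(x+1)(x+1)/R^{\#}(x) = R(x+2)(x+2)/R(x+1)$ being decreasing --- and then reuses the same expansion and property~$(iii)$ applied directly to $M'\sim (U_\alpha P)^{\#}$, needing only the extra mean bound $E[M'] = E[(M)_2]/E[M]\le\lambda^2/\lambda=\lambda$. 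You instead pull all the falling-factorial moments of $M'$ back to those of $M$ via the shift identity $E[(M')_n] = E[(M)_{n+1}]/\lambda \le \lambda^{n+1}/\lambda=\lambda^n$, which bypasses the ULC-preservation-under-size-biasing step entirely; this is valid and arguably more economical for this particular lemma, while the paper's route has the side benefit of recording a reusable closure property of ultra log-concavity. One slip to fix: the parenthetical formula in your first paragraph, $E[M'] = E[M(M-1)]/\lambda + 1 \le \lambda+1$, is wrong under the paper's shifted convention $R^{\#}(y) = (y+1)R(y+1)/\nu$; the correct value is $E[M'] = E[(M)_2]/\lambda \le \lambda$ with no $+1$ --- which is precisely the $n=1$ case of your own identity --- and this matters, since the weaker bound $\lambda+1$ would yield the coefficient $(\lambda+1)q_3$ rather than the claimed $\lambda q_3$.
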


In \cite{johnson21}, the characterization
of the Poisson as a maximum entropy 
distribution was proved through
the decrease of its score function. In
an analogous way, 
following \cite{johnson22}, 
we define the score function 
of a $Q$-compound random variable as follows. 

\begin{definition} \label{def:score}
Given a distribution $P$ on $\Z_+$ with mean $\lambda$, 
the corresponding $Q$-compound distribution
$\cP$ has score function defined by:
\begin{equation} \label{eq:score}
 \sco{\cP}(x) = \frac{ \sum_{y=0}^{\infty} (y+1) P(y+1) \qst{y}(x)}{\lambda
\sum_{y=0}^{\infty} P(y) \qst{y}(x) } - 1 = 
\frac{ \sum_{y=0}^{\infty} (y+1) P(y+1) \qst{y}(x)}{\lambda \cP(x)  } - 1.
\end{equation}
\end{definition}

Notice that the mean of 
of $\sco{\cP}$ with respect to $\cP$ is zero,
and that if $P\sim\mbox{Po}(\lambda)$
then $\sco{\cP}(x) \equiv 0$. Further, 
when $Q$ is the point mass at 1
this score function 
reduces to the ``scaled score function'' introduced in \cite{johnson11}.
But, unlike the scaled score function and 
the alternative score function $r_{2,\cP}$ given in
\cite{johnson22}, this score function is not only a function 
of the compound distribution $\cP$, but also explicitly depends
on $P$. A projection identity and other properties of 
$\sco{\cP}$ are proved in \cite{johnson22}.

Next we show that, if $Q$ is log-concave and $P$
is ultra log-concave, then the score function 
$\sco{\cP}(x)$ is decreasing in $x$.

\begin{lemma} \label{lem:decsc}
If $P$ is ultra log-concave and the compounding 
distribution $Q$ is log-concave,
then the score function $\sco{\cP}(x)$ of 
$\cP$ is decreasing in $x$. 
\end{lemma}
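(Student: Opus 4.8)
The goal is to show that the score function
$$\sco{\cP}(x) = \frac{\sum_{y=0}^{\infty} (y+1) P(y+1) \qst{y}(x)}{\lambda \cP(x)} - 1$$
is decreasing in $x$, given that $P$ is ultra log-concave and $Q$ is log-concave. Since subtracting the constant $1$ is irrelevant to monotonicity, the plan is to show that the ratio
$$g(x) := \frac{\sum_{y=0}^{\infty} (y+1) P(y+1) \qst{y}(x)}{\sum_{y=0}^{\infty} P(y) \qst{y}(x)}$$
is nonincreasing in $x$. Writing $a_y := (y+1)P(y+1)$ and $b_y := P(y)$ for the numerator and denominator coefficients, this is a statement that a ratio of two mixtures of the convolution powers $\qst{y}$ is monotone, and the natural tool is a correlation/covariance inequality after reinterpreting $g(x)$ as a conditional expectation.

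My plan is to recast $g(x)$ probabilistically. Fix $x$ and consider the family of convolution powers $\{\qst{y}(x)\}_{y\geq 0}$ as defining, up to normalization, a distribution on the index $y$; then $g(x)$ is the ratio $E_x[(Y+1)P(Y+1)/P(Y)]$-type quantity, where $Y$ is distributed according to a tilting of $P$ by $\qst{y}(x)$. More precisely, I would define for each $x$ a probability measure $\mu_x(y) \propto P(y)\qst{y}(x)$ on $\Z_+$, so that
$$g(x) = E_{\mu_x}\!\left[\frac{(Y+1)P(Y+1)}{P(Y)}\right].$$
The key structural input is that, by ultra log-concavity of $P$ (inequality~(\ref{eq:ulcdef})), the function $y \mapsto (y+1)P(y+1)/P(y)$ is nonincreasing in $y$. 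Thus $g(x)$ is the expectation of a fixed decreasing function of $Y$ under the family of measures $\mu_x$, and it suffices to prove that the measures $\mu_x$ are stochastically increasing in $x$: as $x$ grows, $\mu_x$ puts more mass on larger $y$, so the expectation of a decreasing function goes down. Stochastic monotonicity of $\mu_x$ in $x$ would follow from a monotone likelihood ratio (MLR) property, i.e. from showing that $\qst{y}(x)/\qst{y'}(x)$ is nondecreasing in $x$ whenever $y > y'$, equivalently that the kernel $(y,x)\mapsto \qst{y}(x)$ is totally positive of order $2$ ($\mathrm{TP}_2$) in $(y,x)$.

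The main obstacle, then, is establishing this $\mathrm{TP}_2$/MLR property of the convolution-power kernel $\qst{y}(x)$ from the log-concavity of $Q$. The cleanest route is to invoke the theory of total positivity: log-concavity of $Q$ means the sequence $Q(\cdot)$ is a $\mathrm{PF}_2$ (Pólya frequency) sequence, and I would argue that convolution preserves this structure so that $\qst{y}$ inherits a total-positivity relationship as $y$ increases by one. Concretely, I expect to reduce the claim to a single-step comparison between $\qst{y}$ and $\qst{y+1} = \qst{y} * Q$, showing that convolving once more with the log-concave $Q$ shifts mass to the right in the likelihood-ratio order; iterating gives the full $\mathrm{TP}_2$ statement. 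I anticipate that the technical heart is verifying the two-by-two determinant inequality
$$\qst{y}(x)\,\qst{y+1}(x+1) \geq \qst{y}(x+1)\,\qst{y+1}(x)$$
(and its general-index analogue) directly from $Q(x)^2 \geq Q(x+1)Q(x-1)$, which is where the log-concavity of $Q$ is genuinely used; alternatively one may phrase the whole monotonicity of $g$ as a single covariance inequality and discharge it via an FKG-type or Chebyshev-type rearrangement argument combined with the $\mathrm{TP}_2$ kernel. Care must be taken at the support boundary (where some $\qst{y}(x)=0$) and to ensure all the sums converge, which is guaranteed since ultra log-concavity of $P$ and log-concavity of $Q$ give finite moments by properties~$(ii)$ and~$(iii)$.
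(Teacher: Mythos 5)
Your proposal is correct and is essentially the paper's own argument recast in stochastic-order language: the $\mathrm{TP}_2$/MLR determinant inequality you identify for the kernel $(y,x)\mapsto \qst{y}(x)$ is exactly the paper's key inequality (\ref{eq:tech}) (obtained via Keilson--Sumita and the preservation of log-concavity under convolution), and your step ``expectation of a decreasing function under stochastically increasing $\mu_x$'' is precisely what the paper proves by symmetrizing the double sum and pairing the ultra-log-concavity bracket with the $\mathrm{TP}_2$ bracket. The only presentational difference is that the paper gets the determinant inequality for all $m\geq n$ in one stroke by writing $\qst{m}=\qst{n}*\qst{(m-n)}$, rather than iterating your single-step comparison (whose support-boundary caveat you correctly flag, and which is handled by the interval supports that log-concavity guarantees).
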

\begin{proof} First we recall Theorem~2.1 of Keilson 
and Sumita \cite{keilson2}, 
which implies that,
if $Q$ is log-concave, then for any $m \geq n$, and for any $x$:
\begin{equation} \label{eq:tech}
 \qst{m}(x+1) \qst{n}(x) - \qst{m}(x) \qst{n}(x+1) \geq 0. \end{equation}
[This can be proved by
considering $\qst{m}$ as the convolution of $\qst{n}$ and $\qst{(m-n)}$, 
and writing
\begin{eqnarray*}
\lefteqn{ \qst{m}(x+1) \qst{n}(x) - \qst{m}(x) \qst{n}(x+1)  } \\
& = & \sum_l \qst{(m-n)}(l) \bigg( \qst{n}(x+1-l) \qst{n}(x) -
\qst{n}(x-l) \qst{n}(x+1) \bigg).
\end{eqnarray*}
Since $Q$ is log-concave, then 
so is $\qst{n}$, 
cf.\ \cite{karlin3},
so the ratio $\qst{n}(x+1)/\qst{n}(x)$ is decreasing in $x$, and  
(\ref{eq:tech}) follows.]

By definition, $\sco{\cP}(x) \geq \sco{\cP}(x+1)$ if and only if,
\begin{eqnarray}
0 & \leq & \left( \sum_y (y+1) P(y+1) \qst{y}(x) \right) \left(
\sum_z P(z) \qst{z}(x+1) \right) \nonumber \\
& &  - \left( \sum_y (y+1) P(y+1) \qst{y}(x+1) \right) 
\left( \sum_z P(z) \qst{z}(x) \right) \nonumber \\
& = & \sum_{y,z} (y+1) P(y+1) P(z) \left[ \qst{y}(x) \qst{z}(x+1)
- \qst{y}(x+1) \qst{z}(x) \right]. \label{eq:doublesum}
\end{eqnarray}
Noting that for $y=z$ the term in square brackets in the
double sum becomes zero, and swapping the values of $y$ and
$z$ in the range $y>z$,
the double sum in
(\ref{eq:doublesum}) becomes,
$$ \sum_{y < z} [(y+1) P(y+1) P(z) - (z+1) P(z+1) P(y)]
\left[ \qst{y}(x) \qst{z}(x+1)
- \qst{y}(x+1) \qst{z}(x) \right].$$
By the ultra log-concavity of $P$, the first square 
bracket is positive for $y \leq z$,
and by equation~(\ref{eq:tech}) the second square bracket is 
also positive for $y \leq z$.
\end{proof}

We remark that, under the same assumptions, and using a very similar 
argument, an analogous result holds for the score function $r_{2,\cP}$
recently introduced in \cite{johnson22}.

Combining Lemmas~\ref{lem:decsc} and~\ref{lem:moments}
with equation~(\ref{eq:newheat}) 
we deduce the following result,
which is the main technical step
in the proof of Theorem~\ref{thm:mainpoi} below.

\begin{proposition} \label{prop:deriv}
Let $P$ be an ultra log-concave distribution on ${\mathbb Z}_+$
with mean $\lambda>0$, and assume that 
$Q$ and $\mbox{\em CPo}(\lambda,Q)$ are
both log-concave. Let $W_\alpha$ be a 
random variable with  distribution $U_\alpha^QP$, 
and define, for all $\alpha\in[0,1],$
the function,
$$E(\alpha):=E[-\log \CPi_{\lambda}(W_{\alpha})].$$
Then $E(\alpha)$ is continuous for all $\alpha\in[0,1]$,
it is differentiable for $\alpha\in(0,1)$, and,
moreover, $E'(\alpha)\leq 0$ for $\alpha\in(0,1)$.
In particular, $E(0)\geq E(1)$.
\end{proposition}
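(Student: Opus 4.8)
The plan is to read $E(\alpha)$ as the cross-entropy between the evolving law $U^Q_\alpha P$ and the fixed target $\CPi_\lambda$, and to show its $\alpha$-derivative is a covariance of a decreasing and an increasing function, hence nonpositive. Write $g(x):=-\log\CPi_\lambda(x)$, so that $E(\alpha)=\sum_x U^Q_\alpha P(x)\,g(x)$, abbreviate $R_\alpha:=U_\alpha P$ (which has mean $\lambda$), so $U^Q_\alpha P=C_Q R_\alpha$, and let $\rho_\alpha:=\sco{U^Q_\alpha P}$ be the compound score of $U^Q_\alpha P$. The first task is to differentiate $U^Q_\alpha P(x)=\sum_y R_\alpha(y)\qst{y}(x)$ in $\alpha$. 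Substituting (\ref{eq:newheat}) for $\partial_\alpha R_\alpha(y)$ and regrouping the four resulting sums by convolution (using $\sum_y R_\alpha(y)\qst{(y+1)}(x)=(Q*U^Q_\alpha P)(x)$ and the shift $\sum_y yR_\alpha(y)\qst{(y-1)}(x)=\sum_y(y+1)R_\alpha(y+1)\qst{y}(x)=\lambda\, U^Q_\alpha P(x)\,(\rho_\alpha(x)+1)$), I expect the derivative to collapse to
$$\alpha\,\frac{\partial}{\partial\alpha}U^Q_\alpha P(x)=\lambda\big[(Q*\psi)(x)-\psi(x)\big],\qquad \psi(x):=U^Q_\alpha P(x)\,\rho_\alpha(x).$$
This reduction is essentially a discrete summation-by-parts exploiting that $\CPi_\lambda$ sits at $\alpha=0$ of the flow; it is the technical core of the computation, but routine once the convolution bookkeeping is arranged.

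Next, pairing with $g$ and shifting the convolution onto $g$ gives
$$\alpha\,E'(\alpha)=\lambda\sum_x\psi(x)\,\mathcal{D}g(x),\qquad \mathcal{D}g(x):=\sum_{w}Q(w)\big[g(x+w)-g(x)\big].$$
Because the score has zero mean under its own law, $\sum_x U^Q_\alpha P(x)\rho_\alpha(x)=0$, so the right-hand side is exactly a covariance taken under $U^Q_\alpha P$,
$$\alpha\,E'(\alpha)=\lambda\,\cov(\rho_\alpha,\mathcal{D}g).$$
It then remains to control monotonicity of the two factors. Since $U_\alpha$ preserves ultra log-concavity (\cite{johnson21}), $R_\alpha=U_\alpha P$ is ultra log-concave, so Lemma~\ref{lem:decsc} (applied with $U_\alpha P$ in place of $P$) shows $\rho_\alpha(x)$ is decreasing in $x$. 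For $\mathcal{D}g$: log-concavity of $\CPi_\lambda$ means $g=-\log\CPi_\lambda$ has nondecreasing forward increments, i.e.\ $g$ is discretely convex on its support, so $g(x+w)-g(x)$ is nondecreasing in $x$ for each fixed $w\ge1$, whence $\mathcal{D}g$ is increasing. The covariance of a decreasing and an increasing function of the same variable is nonpositive (Chebyshev's correlation inequality), giving $\alpha E'(\alpha)\le0$, and hence $E'(\alpha)\le0$ on $(0,1)$.

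Finally come the analytic claims. For fixed $x$ the map $\alpha\mapsto U^Q_\alpha P(x)$ is smooth on $(0,1)$ (it is built from $\alpha$-thinning and an independent $\mbox{Po}(\lambda(1-\alpha))$ term), so differentiability there reduces to differentiating $E$ term-by-term; to justify this interchange and to extend continuity up to $\alpha=0,1$ I would dominate the summands uniformly in $\alpha$. Log-concavity forces $\CPi_\lambda(x)\ge e^{-\lambda}(\lambda Q(1))^x/x!$, so $g$ grows no faster than $x\log x$ and $\mathcal{D}g$ grows at most logarithmically, while the remaining $x$-weights $U^Q_\alpha P(x)|\rho_\alpha(x)|$ are controlled by the uniform third-moment bound of Lemma~\ref{lem:moments} on $W_\alpha$ (and the size-biased $V_\alpha$); together these supply the dominating function for dominated convergence. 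Integrating $E'\le0$ over $(0,1)$ and invoking continuity then yields $E(0)\ge E(1)$. I expect this last, analytic step to be the genuine obstacle — the sign of $E'$ follows cleanly from the covariance structure once the two monotonicities are established — and indeed Lemma~\ref{lem:moments} seems to be introduced precisely to meet it.
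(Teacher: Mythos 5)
Your proposal is correct and follows essentially the same route as the paper's proof: the identical formula for $\alpha\,\frac{\partial}{\partial\alpha}U^Q_\alpha P(x)$ derived from (\ref{eq:newheat}), the same rewriting of $\alpha E'(\alpha)$ as $\lambda$ times the covariance under $U^Q_\alpha P$ of the score $\sco{U^Q_\alpha P}$ (decreasing by Lemma~\ref{lem:decsc}, since $U_\alpha$ preserves ultra log-concavity) against the increasing function $\log \CPi_{\lambda}(x)-\sum_v Q(v)\log\CPi_{\lambda}(x+v)$, concluded by Chebyshev's rearrangement inequality. The analytic justification you sketch (the lower bound $\CPi_\lambda(x)\geq e^{-\lambda}(\lambda Q(1))^x/x!$ coming from $Q(1)>0$, plus the uniform third-moment bounds of Lemma~\ref{lem:moments} on $W_\alpha$ and $V_\alpha$ to dominate the series and permit term-by-term differentiation) is also exactly the paper's argument.
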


\begin{proof} 
Recall that, 
$$ 
U^Q_{\alpha}P(x) = 
C_Q U_\alpha P(x)
= 
\sum_{y=0}^{\infty} U_{\alpha}P(y) \qst{y}(x)
=\sum_{y=0}^{x} U_{\alpha}P(y) \qst{y}(x),
$$
where the last sum is restricted to the
range $0\leq y\leq x$, because
$Q$ is supported on $\Nat$.
Therefore, since $U_\alpha P(x)$ is continuous
in $\alpha$ \cite{johnson21},
so is $U_\alpha^Q P(x)$,
and to show that $E(\alpha)$ is continuous
it suffices to show that the series,
\begin{eqnarray}
E(\alpha)
:=E[-\log \CPi_{\lambda}(W_{\alpha})]
=-\sum_{x=0}^\infty U_\alpha^QP(x)\log\CPi_\lambda(x),
\label{eq:series}
\end{eqnarray}
converges uniformly. To that end,
first observe that log-concavity of $C_Q\Pi_\lambda$
implies that $Q(1)$ is nonzero. [Otherwise,
if $i>1$ be the smallest integer $i$ such that $Q(i)\neq 0$, 
then $C_Q\Pi_\lambda(i+1)=0$, but 
$C_Q\Pi_\lambda(i)$ and
$C_Q\Pi_\lambda(2i)$ are both strictly positive,
contradicting the log-concavity of 
$C_Q\Pi_\lambda$.]
Since $Q(1)$ is nonzero, we can bound the
compound Poisson probabilities as, 
$$1 \geq \CPi_{\lambda}(x) = \sum_{y} [e^{-\lambda} \lambda^y/y!]\qst{y}(x)
\geq e^{-\lambda} [\lambda^x/x!] Q(1)^x, 
\;\;\;\;\mbox{for all}\;x\geq 1,$$
so that the summands in (\ref{eq:series})
can be bounded,
\begin{equation} \label{eq:boundlog}
0 \leq - \log \CPi_{\lambda}(x) \leq \lambda + \log x! - x \log( \lambda Q(1))
\leq Cx^2,
\;\;\;\;x\geq 1,\end{equation} 
for a constant $C>0$ that depends only on $\lambda$ and $Q(1)$.
Therefore, for any $N\geq 1$, the tail of the series (\ref{eq:series})
can be bounded,
$$
0\leq -\sum_{x=N}^\infty U_\alpha^QP(x)\log\CPi_\lambda(x)
\leq C E[W^2_\alpha{\mathbb I}_{\{W_\alpha\geq N\}}]
\leq \frac{C}{N}E[W_\alpha^3],$$
and, in view of Lemma~\ref{lem:moments}, 
it converges uniformly.

Therefore, $E(\alpha)$ is continuous in $\alpha$, 
and, in particular, convergent for all $\alpha\in[0,1]$.
To prove that it is differentiable at each $\alpha\in(0,1)$
we need to establish that: (i)~the summands in (\ref{eq:series})
are continuously differentiable in $\alpha$ for each $x$; 
and (ii)~the series
of derivatives converges uniformly. 

Since, as noted above, $U_\alpha^Q P(x)$ is defined
by a finite sum, we can differentiate with respect 
to $\alpha$ under the sum, to obtain,
\begin{eqnarray}
\frac{\partial}{\partial \alpha} 
U^Q_{\alpha} P(x)
=
\frac{\partial}{\partial \alpha} 
C_Q U_\alpha P(x)
= \sum_{y=0}^{x} \frac{\partial}{\partial \alpha} 
U_{\alpha}P(y) \qst{y}(x).
\label{eq:finite}
\end{eqnarray}
And since $U_\alpha P$ is continuously 
differentiable in $\alpha\in(0,1)$
for each $x$ (cf.\ \cite[Proposition~3.6]{johnson21}
or equation (\ref{eq:newheat}) above),
so are the summands in (\ref{eq:series}),
establishing~(i); in fact, they are
infinitely differentiable, which can be seen 
by repeated applications of (\ref{eq:newheat}). 
To show that the
series of derivatives converges uniformly,
let $\alpha$ be restricted in an arbitrary
open interval $(\epsilon,1)$ for some $\epsilon>0$.
The relation (\ref{eq:newheat})
combined with (\ref{eq:finite}) yields,
for any $x$,
\begin{eqnarray}
\lefteqn{\frac{\partial}{\partial \alpha} U_\alpha^Q P (x)} \nonumber \\
& = & \sum_{y=0}^{x} 
\biggl( \lambda
(U_{\alpha}P(y) - U_{\alpha}P(y-1)
- ((y+1) U_{\alpha}P(y+1) - y U_{\alpha}P(y)) \biggr)
\qst{y}(x) \nonumber \\
& = & -\frac{1}{\alpha}  \sum_{y=0}^{x} 
\left( (y+1) U_{\alpha}P(y+1) - \lambda U_{\alpha}P(y) \right)
 (\qst{y}(x) - \qst{y+1}(x)) \nonumber \\
& = & - \frac{1}{\alpha} \sum_{y=0}^{x} 
\left( (y+1) U_{\alpha}P(y+1) - \lambda U_{\alpha}P(y) \right)
\qst{y}(x) \nonumber \\
& &  + \sum_{v=0}^{x} Q(v) \frac{1}{\alpha} \sum_{y=0}^{x} 
\left( (y+1) U_{\alpha}P(y+1) - \lambda U_{\alpha}P(y) \right)
\qst{y}(x-v) \nonumber \\
& = & - \frac{\lambda}{\alpha} U_\alpha^Q P(x) \left( 
\frac{ \sum_{y=0}^{x} (y+1) U_{\alpha}P(y+1) \qst{y}(x)}{\lambda
U_\alpha^Q P(x)  } - 1 \right) \nonumber \\
&  &  + \frac{\lambda}{\alpha}
\sum_{v=0}^{x} Q(v) U_\alpha^Q P(x-v) \left( 
\frac{ \sum_{y=0}^{x} (y+1) U_{\alpha}P(y+1) \qst{y}(x-v)}{\lambda
U_\alpha^Q P(x-v)  } - 1 \right) \nonumber \\
& = & - \frac{\lambda}{\alpha} \left( U_\alpha^Q P(x) \sco{U_\alpha^Q P}(x)
- \sum_{v=0}^{x} Q(v) U_\alpha^Q P(x-v) \sco{U_\alpha^Q P}(x-v) \right) .
\label{eq:derivative} \end{eqnarray}
Also,
for any $x$, by definition,
$$|U_\alpha^Q P(x) \sco{U_\alpha^Q P}(x)| 
\leq 
C_Q(U_\alpha P)^{\#}(x) +U_\alpha^QP(x), 
$$ 
where, for any distribution $P$, we write
$P^{\#}(y) = P(y+1)(y+1)/\lambda$ for its size-biased version.
Hence for any $N\geq 1$, equations
(\ref{eq:derivative}) and (\ref{eq:boundlog}) yield the bound,
\begin{eqnarray*}
\lefteqn{
\left| \sum_{x=N}^{\infty} \frac{\partial}{\partial \alpha} 
	U_\alpha^Q P(x) \log \CPi_{\lambda}(x) \right| } \\
& \leq & 
	\sum_{x=N}^{\infty} 
	\frac{C\lambda x^2}{\alpha}
	\Big\{ 
	C_Q(U_\alpha P)^{\#}(x) +U_\alpha^QP(x)
	+ \sum_{v=0}^{x} Q(v)
	[
	C_Q(U_\alpha P)^{\#}(x-v) +U_\alpha^QP(x-v)
	] \Big\}\\
& = & 
	\frac{2C}{\alpha}
	E\Big[
	\Big(
	V_\alpha^2+W_\alpha^2+X^2
	+XV_\alpha+XW_\alpha
	\Big)
	{\mathbb I}_{\{V_\alpha\geq N,\;W_\alpha\geq N,\;X\geq N\}}
	\Big]\\
&\leq&
	\frac{C'}{\alpha}
	\Big\{
	E[V_\alpha^2 {\mathbb I}_{\{V_\alpha\geq N\}}]
	+E[W_\alpha^2 {\mathbb I}_{\{W_\alpha\geq N\}}]
	+E[X^2 {\mathbb I}_{\{X\geq N\}}]
	\Big\}\\
&\leq&
	\frac{C'}{N\alpha}
	\Big\{
	E[V_\alpha^3]
	+E[W_\alpha^3]
	+E[X^3]
	\Big\},
\end{eqnarray*}
where $C,C'>0$ are appropriate finite constants, 
and the random variables 
$V_\alpha\sim C_Q(U_\alpha P)^{\#}$,
$W_\alpha\sim U^Q_\alpha P$ and $X\sim Q$ are independent.
Lemma~\ref{lem:moments} implies that this bound
converges to zero uniformly in $\alpha\in(\epsilon,1)$, as $N\to\infty$.
Since $\epsilon>0$ was arbitrary,
this establishes that 
$E(\alpha)$ is differentiable for all $\alpha\in(0,1)$
and, in fact, that we can differentiate the 
series (\ref{eq:series})
term-by-term, to obtain,
\begin{eqnarray}
\lefteqn{ 
E'(\alpha) 
\;=\;
	- \sum_{x=0}^{\infty} \frac{\partial}{\partial \alpha} 
	U_\alpha^Q P(x) \log \CPi_{\lambda}(x)
	} 
	\label{eq:step1}  \\
& = & 
	\frac{\lambda}{\alpha} \sum_{x=0}^{\infty}
	\left( U_\alpha^Q P(x) \sco{U_\alpha^Q P}(x)
	- \sum_{v=0}^{x} Q(v) U_\alpha^Q P(x-v) \sco{U_\alpha^Q P}(x-v) \right)
 	\log \CPi_{\lambda}(x) \nonumber \\
& = & 
	\frac{\lambda}{\alpha} \sum_{x=0}^{\infty} 
	U_\alpha^Q P(x) \sco{U_\alpha^Q P}(x)
	\left( \log \CPi_{\lambda}(x) - \sum_{v=0}^{\infty} Q(v) 
	\log \CPi_{\lambda}(x+v) 
	\right),
	\nonumber
\end{eqnarray}
where the second equality follows from using
(\ref{eq:derivative}) above, and the rearrangement 
leading to the third equality follows by interchanging
the order of (second) double summation and replacing $x$ 
by $x+v$.

Now we note that, exactly as in \cite{johnson21}, 
the last series above is the covariance between
the (zero-mean) function $\sco{U_\alpha^Q P}(x)$ 
and the function $\left( \log \CPi_{\lambda}(x) 
- \sum_v Q(v) \log \CPi_{\lambda}(x+v) \right)$,
under the measure $U_\alpha^Q P$.
Since $P$ is ultra log-concave, so is $U_\alpha P$
\cite{johnson21}, hence the score function
$\sco{U_\alpha^Q P}(x)$ is decreasing in $x$, 
by Lemma~\ref{lem:decsc}. Also, the 
log-concavity of $\CPi_{\lambda}$ implies that the 
second function is increasing, and
Chebyshev's rearrangement lemma
implies that the covariance is 
less than or equal to zero, proving
that $E'(\alpha)\leq 0$, as claimed.

Finally, the fact that $E(0)\geq E(1)$
is an immediate consequence of the
continuity of $E(\alpha)$ on $[0,1]$
and the fact that 
$E'(\alpha)\leq 0$ for all $\alpha\in(0,1)$.
\end{proof}

Notice that, for the above proof to work, it is not necessary that
$\CPi_{\lambda}$ be log-concave; the weaker property
that $\left( \log \CPi_{\lambda}(x) 
- \sum_v Q(v) \log \CPi_{\lambda}(x+v) \right)$ be increasing is enough.

\begin{proof}{\bf of Theorem~\ref{thm:mainpoi}}
%
As in Proposition~\ref{prop:deriv},
let $W_\alpha\sim U^Q_\alpha P=C_Q U_\alpha P$,
and let $D(P\|Q)$ denote the relative entropy
between $P$ and $Q$, 
$$D(P\|Q):=\sum_{x\geq 0}P(x)\log\frac{P(x)}{Q(x)}.$$
Then, noting that $W_0\sim \CPi_\lambda$ and $W_1\sim C_Q P$,
we have,
\begin{eqnarray*}
H(C_Q P)
&\leq&
	H(C_Q P)+D(C_Q P\|\CPi_\lambda)\\
&=&
	-E[\log \CPi_\lambda(W_1)]\\
&\leq&
	-E[\log \CPi_\lambda(W_0)]\\
&=&
	H(\CPi_\lambda),
\end{eqnarray*}
where the first inequality is simply 
the nonnegativity of relative entropy,
and the second inequality is exactly
the statement that $E(1)\leq E(0)$,
proved in Proposition~\ref{prop:deriv}.
\end{proof}

\newpage

\section{Maximum Entropy Property of the Compound Binomial Distribution} 
\label{sec:compbin}

Here we prove the maximum entropy result for compound 
binomial random variables, Theorem~\ref{thm:mainber}. 
The proof, to some extent, parallels some
of the arguments in \cite{harremoes}\cite{mateev}\cite{shepp},
which rely on differentiating the compound-sum probabilities
$\bp{p}(x)$ 
for a given parameter vector $\vc{p}=(p_1,p_2,\ldots,p_n)$
(recall Definition~\ref{def:bp} in the Introduction),
with respect to an individual $p_i$.
Using the representation,
\begin{equation} \label{eq:master}
\cp{p}(y) = 
\sum_{x=0}^n \bp{p}(x) Q^{*x}(y),
\;\;\;\;y\geq 0,
\end{equation}
differentiating $\cp{p}(x)$
reduces to differentiating
$\bp{p}(x)$,
and leads to an expression
equivalent to that derived
earlier 
in (\ref{eq:derivative})
for the derivative of $C_Q U_\alpha P$
with respect to $\alpha$.
\begin{lemma} \label{lem:partials}
Given a parameter vector $\vc{p}=(p_1,p_2,\ldots,p_n)$,
with $n\geq 2$ and
each $0 \leq p_i \leq 1$, 
let, 
$$ \vc{p_t} = \left( \frac{p_1 + p_2}{2} + t, \frac{p_1 + p_2}{2} - t, p_3, \ldots, p_n \right),$$
for $t \in [-(p_1+p_2)/2, (p_1 + p_2)/2]$. Then,
\begin{equation} \label{eq:maindiff}
\frac{\partial}{\partial t} \cp{p_t}(x) =  (- 2t) 
\sum_{y=0}^n \bp{\wt{p}}(y)
\left( Q^{*(y+2)}(x) - 2 Q^{*(y+1)}(x) + Q^{*y}(x) \right), 
\end{equation}
where $\vc{\wt{p}} = (p_3, \ldots, p_n)$.  
\end{lemma}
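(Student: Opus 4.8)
The plan is to isolate the dependence on $t$ in the first two coordinates of $\vc{p_t}$ and reduce the statement to a short, explicit computation. Write $a=(p_1+p_2)/2$, so that $\vc{p_t}=(a+t,a-t,p_3,\ldots,p_n)$, and decompose the underlying Bernoulli sum as $S_n=(B_1+B_2)+(B_3+\cdots+B_n)$, where $B_1\sim\bern{a+t}$ and $B_2\sim\bern{a-t}$ are independent of the remaining block $B_3+\cdots+B_n$, whose law is $\bp{\wt{p}}$ with $\vc{\wt{p}}=(p_3,\ldots,p_n)$. Since $p_3,\ldots,p_n$ do not depend on $t$, all of the $t$-dependence of $\bp{p_t}$ is carried by the distribution of $T:=B_1+B_2$.

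First I would compute the law of $T$ directly: $T$ takes the values $0,1,2$ with probabilities
$$\pi_0=(1-a)^2-t^2,\qquad \pi_1=2a(1-a)+2t^2,\qquad \pi_2=a^2-t^2,$$
and one checks that $\pi_0+\pi_1+\pi_2=1$. The key structural observation is that each $\pi_j$ is quadratic in $t$, with $\partial_t(\pi_0,\pi_1,\pi_2)=(-2t)\,(1,-2,1)$; that is, the $t$-derivative of the two-point kernel $(\pi_0,\pi_1,\pi_2)$ is exactly $-2t$ times the discrete second-difference kernel $(1,-2,1)$. This is precisely what will produce the second difference of convolution powers of $Q$ in the claimed formula.

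Next, writing $\bp{p_t}$ as the convolution of the law of $T$ with $\bp{\wt{p}}$,
$$\bp{p_t}(k)=\pi_0\,\bp{\wt{p}}(k)+\pi_1\,\bp{\wt{p}}(k-1)+\pi_2\,\bp{\wt{p}}(k-2),$$
and differentiating in $t$ (the $\bp{\wt{p}}$ factors being constant in $t$), I obtain
$$\frac{\partial}{\partial t}\bp{p_t}(k)=-2t\Big(\bp{\wt{p}}(k)-2\,\bp{\wt{p}}(k-1)+\bp{\wt{p}}(k-2)\Big).$$
Finally I would insert this into the compounding representation~(\ref{eq:master}), $\cp{p_t}(x)=\sum_{k=0}^{n}\bp{p_t}(k)\,\qst{k}(x)$. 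Because this is a finite sum, differentiation under the summation is immediate, and after reindexing the three shifted sums (replacing $k$ by $y$, $y+1$, $y+2$, respectively) the terms regroup into $\sum_y \bp{\wt{p}}(y)\big(\qst{(y+2)}(x)-2\qst{(y+1)}(x)+\qst{y}(x)\big)$, which is exactly~(\ref{eq:maindiff}).

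There is essentially no analytic obstacle here: all sums are finite and all quantities are polynomial in $t$, so every interchange of differentiation and summation is trivially justified. The only points requiring care are the elementary algebra in computing $\pi_0,\pi_1,\pi_2$ and the bookkeeping of the index shifts in the reindexing step; both are routine, and the whole argument hinges on the single observation that the $t$-derivative of the two-point convolution kernel $(\pi_0,\pi_1,\pi_2)$ is the discrete second-difference operator.
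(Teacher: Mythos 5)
Your proof is correct and follows essentially the same route as the paper's: both isolate the $t$-dependence in the first two coordinates, observe that the resulting two-point kernel is quadratic in $t$ with derivative $(-2t)(1,-2,1)$, and then differentiate the finite compounding sum term-by-term and reindex. The only cosmetic difference is that you phrase the decomposition probabilistically via the law of $T=B_1+B_2$, while the paper writes the same three coefficients algebraically in terms of $k=p_1+p_2$.
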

\begin{proof} Note that the sum of the entries of $\vc{p}_t$ is
constant as $t$ varies, and that $\vc{p_t} = \vc{p}$
for $t = (p_1 - p_2)/2$, while
$\vc{p_t} = ( (p_1 + p_2)/2, (p_1 + p_2)/2, p_3, \ldots, p_n)$
for $t=0$. Writing
$k = p_1 + p_2$, $\bp{p_t}$ can be expressed,
\begin{eqnarray*}
\bp{p_t}(y) & = & 
\left(  \frac{k^2}{4} - t^2 \right) 
\bp{\wt{p}}(y-2)   
+ \left( k \left( 1 - \frac{k}{2} \right)
+2 t^2 \right) \bp{\wt{p}}(y-1) \\
& & + \left( \left( 1 - \frac{k}{2} \right)^2 - t^2 \right) 
\bp{\wt{p}}(y),
\end{eqnarray*}
and its derivative with respect to $t$ is,
$$ \frac{ \partial}{\partial t} \bp{p_t}(y) 
= - 2t \left(  \bp{\wt{p}}(y-2) - 2 \bp{\wt{p}}(y-1) + \bp{\wt{p}}(y) \right).$$
The expression (\ref{eq:master}) for
$\cp{p_t}$ shows that it is
a finite linear combination of compound-sum
probabilities $\bp{p_t}(x)$,
so we can differentiate inside the sum to obtain, 
\begin{eqnarray*}
\frac{ \partial}{\partial t} \cp{p_t}(x)
&  = & \sum_{y=0}^n \frac{ \partial}{\partial t} \bp{p_t}(y)
Q^{*y}(x) \\
& = & - 2t \sum_{y=0}^n \left(  \bp{\wt{p}}(y-2) - 2 \bp{\wt{p}}(y-1) + \bp{\wt{p}}(y) \right) Q^{*y}(x) \\
& = & -2 t \sum_{y=0}^{n-2} \bp{\wt{p}}(y) \left( 
Q^{*(y+2)}(x) - 2 Q^{*(y+1)}(x) + Q^{*y}(x) \right),  
\end{eqnarray*}
since $\bp{\wt{p}}(y) = 0$ for $y \leq -1$ and $y \geq n-1$. \end{proof}

Next we state and prove the equivalent 
of Proposition~\ref{prop:deriv} above:
\begin{proposition} \label{prop:deriv2}
Suppose that the distribution $Q$ on $\Nat$ 
and the compound binomial distribution 
$\mbox{\em CBin}(n,\lambda/n,Q)$
are both log-concave; let 
$\vc{p}=(p_1,p_2,\ldots,p_n)$ be a 
given parameter vector with $n\geq 2$,
$p_1 +p_2+ \ldots + p_n = \lambda>0$,
and $p_1\geq p_2$;
let $W_t$ be a 
random variable with distribution $\cp{p_t}$;
and define, for all $t\in[0,(p_1-p_2)/2],$
the function,
$$E(t):=E[-\log \cp{\pbar}(W_t)],$$
where $\pbar$ denotes the parameter
vector with all entries equal to $\lambda/n$.
If $Q$ satisfies either of the conditions:
$(a)$~$Q$ finite support; or 
$(b)$~$Q$ has tails heavy enough so that,
for some $\rho,\beta>0$ and $N_0\geq 1$, 
we have, $Q(x)\geq \rho^{x^\beta}$,
for all $x\geq N_0$, then
$E(t)$ is continuous for all 
$t\in[0,(p_1-p_2)/2]$,
it is differentiable for 
$t\in(0,(p_1-p_2)/2)$,
and, moreover, $E'(t)\leq 0$ for
$t\in(0,(p_1-p_2)/2)$.
In particular, $E(0)\geq E((p_1-p_2)/2)$.
\end{proposition}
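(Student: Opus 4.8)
The plan is to run the same argument as in Proposition~\ref{prop:deriv}, with Lemma~\ref{lem:partials} taking the place of the heat-type identity (\ref{eq:newheat}). Writing $g(x):=\log\cp{\pbar}(x)$, the object of study is the series
$$E(t)=-\sum_{x=0}^\infty \cp{p_t}(x)\,g(x).$$
For each fixed $x$, the representation (\ref{eq:master}) together with the explicit formula for $\bp{p_t}(y)$ in Lemma~\ref{lem:partials} shows that $\cp{p_t}(x)$ is a (quadratic) polynomial in $t$, so every summand is smooth. The substance of the proof is therefore: (i) to justify that this series and its term-by-term derivative converge uniformly for $t\in[0,(p_1-p_2)/2]$, giving continuity of $E$ and legitimacy of differentiating under the sum; and (ii) to identify the sign of $E'(t)$.

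I would treat the sign first. Substituting (\ref{eq:maindiff}) and then summing by parts --- using that the adjoint of convolution by $Q$ shifts in the opposite direction, $\sum_x(\sum_v Q(v)f(x-v))g(x)=\sum_x f(x)(\sum_v Q(v)g(x+v))$ --- collapses the second difference in (\ref{eq:maindiff}) onto $g$ and yields
$$E'(t)=2t\sum_{x=0}^\infty \cp{\wt{p}}(x)\Big(\sum_u\qst{2}(u)g(x+u)-2\sum_v Q(v)g(x+v)+g(x)\Big).$$
Because $2t\ge0$ on $[0,(p_1-p_2)/2]$ and $\cp{\wt{p}}(x)\ge0$, it is enough to check that the bracket is nonpositive for each $x$. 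Rewriting it as $\sum_{v,w\ge1}Q(v)Q(w)\big(g(x+v+w)-g(x+v)-g(x+w)+g(x)\big)$, each mixed second difference is $\le0$ since log-concavity of $\cp{\pbar}=\mbox{CBin}(n,\lambda/n,Q)$ makes $g$ concave on its interval of support, so its increments are nonincreasing; summing against the nonnegative weights $Q(v)Q(w)$ preserves the sign. This is the compound-binomial counterpart of the covariance step following (\ref{eq:derivative}) in Proposition~\ref{prop:deriv}; here, because the two-coordinate perturbation in Lemma~\ref{lem:partials} produces a genuine \emph{second} difference, the sign follows directly from concavity of $g$ rather than through Chebyshev's rearrangement inequality. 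Thus $E'(t)\le0$.

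The analytic justification in step (i) is where I expect the real difficulty, and where conditions $(a)$ and $(b)$ enter. Exactly as in Proposition~\ref{prop:deriv}, log-concavity of $\cp{\pbar}$ forces $Q(1)>0$, so that $\cp{\pbar}(x)>0$ on the common interval of support of all the $\cp{p_t}$ and $g$ stays finite wherever it is used (note that the shifted arguments $x+v+w$ appearing above, with $\cp{\wt{p}}$ using only $n-2$ coordinates, remain within the support of $\cp{\pbar}$). Under $(a)$ every distribution in sight has finite support, each series above is a finite sum, and continuity and differentiability are immediate. Under $(b)$ I would first derive the growth bound $-g(x)\le C(1+x^\beta)$, the analog of (\ref{eq:boundlog}): since $\cp{\pbar}(x)\ge\bp{\pbar}(1)Q(x)\ge\bp{\pbar}(1)\rho^{x^\beta}$ for $x\ge N_0$, one gets $-\log\cp{\pbar}(x)\le -\log\bp{\pbar}(1)-x^\beta\log\rho$, which is of the claimed form because $\rho\in(0,1)$. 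Since $Q$ is log-concave it has moments of all orders (property~$(ii)$), hence each $\cp{p_t}$, being a finite compound Bernoulli sum, has moments of all orders, bounded uniformly over the compact $t$-interval. Combining the growth bound on $g$ with a high-enough moment of $W_t$ --- and, for the derivative series, with the same moment control applied to the shifted mixtures of $\cp{\wt{p}}$ that arise from (\ref{eq:maindiff}), in the spirit of the use of Lemma~\ref{lem:moments} in Proposition~\ref{prop:deriv} --- bounds the tails of both series by quantities of the form $\frac{C}{N}E[W_t^{k}]$ that tend to $0$ uniformly in $t$ as $N\to\infty$. This delivers the required uniform convergence (and justifies the interchanges of summation used in the summation by parts), establishing continuity of $E$ on the closed interval and term-by-term differentiability on the open interval. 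The final assertion $E(0)\ge E((p_1-p_2)/2)$ then follows from $E'(t)\le0$ together with continuity.
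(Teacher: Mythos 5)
Your proposal is correct and follows essentially the same route as the paper: differentiating via Lemma~\ref{lem:partials}, reindexing the second difference onto $\log\cp{\pbar}$ (your ``adjoint of convolution'' step is exactly the paper's rearrangement leading to (\ref{eq:binstep3})), concluding the sign from log-concavity of $\cbin{}(n,\lambda/n,Q)$, and handling cases $(a)$ and $(b)$ with the same $-\log\cp{\pbar}(x)\leq Cx^\beta$ bound and uniform moment control. The only cosmetic difference is that you phrase the key inequality as concavity of $\log\cp{\pbar}$ making mixed second differences nonpositive, which is identical in content to the ratio inequality the paper invokes.
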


\begin{proof} 
The compound distribution $C_Q\bp{p_t}$ is
defined by the finite sum,
$$
C_Q\bp{p_t}(x)=\sum_{y=0}^n\bp{p_t}(y)Q^{*y}(x),$$
and is, therefore, continuous in $t$. First,
assume that $Q$ has finite support.
Then so does $C_Q\bp{p}$ for any parameter
vector $\vc{p}$, and the continuity and 
differentiability of $E(t)$ are trivial.
In particular, the series defining $E(t)$ 
is a finite sum, so we can differentiate 
term-by-term, to obtain,
\begin{eqnarray}
E'(t)
& = & - \sum_{x=0}^{\infty} \frac{\partial}{\partial t} \cp{p_t}(x) 
	\log \cp{\pbar}(x) \nonumber  \\
& = & 2t \sum_{x=0}^{\infty}
\sum_{y=0}^{n-2} \bp{\wt{p}}(y)
\left( Q^{*(y+2)}(x) - 2 Q^{*(y+1)}(x) + Q^{*y}(x) \right) 
\log \cp{\pbar}(x)
 \label{eq:binstep2}  \\
& = & 2t \sum_{y=0}^{n-2} \sum_{z=0}^{\infty} \bp{\wt{p}}(y) Q^{*y}(z) \sum_{v,w} Q(v) Q(w)
\bigg[ \log \cp{\pbar}(z+v+w) - \log \cp{\pbar}(z+v)  \nonumber \\
& & \hspace*{6.5cm}
-  \log \cp{\pbar}(z+w) + \log \cp{\pbar}(z)
\bigg], \label{eq:binstep3}
\end{eqnarray}
where (\ref{eq:binstep2}) follows by Lemma~\ref{lem:partials}. 
By assumption, the distribution $\cp{\pbar}=\mbox{CBin}(n,\lambda/n,Q)$ 
is log-concave,
which implies that,
for all $z,v,w$ such that $z+v+w$ is in the
support of $\mbox{CBin}(n,\lambda/n,Q)$,
\begin{equation*} 
\frac{ \cp{\pbar}(z)}{\cp{\pbar}(z+v)}
\leq \frac{ \cp{\pbar}(z+w)}{\cp{\pbar}(z+v+w)}.
\end{equation*}
Hence the term in square brackets in equation (\ref{eq:binstep3}) 
is negative, and the result follows.

Now, suppose condition $(b)$ holds on the tails of $Q$.
First we note that the moments of $W_t$ are all uniformly
bounded in $t$: Indeed, for any $\gamma>0$,
\begin{equation}
E[W_t^\gamma]=\sum_{x=0}^\infty 
C_Q\bp{p_t}(x)
x^\gamma 
=
\sum_{x=0}^\infty 
\sum_{y=0}^n\bp{p_t}(y) Q^{*y}(x)
x^\gamma
\leq
\sum_{y=0}^n
\sum_{x=0}^\infty 
Q^{*y}(x)
x^\gamma
\leq C_nq_\gamma,
\label{eq:moment}
\end{equation}
where $C_n$ is a constant depending
only on $n$, and $q_\gamma$ is the
$\gamma$th moment of $Q$, which
is of course finite; recall property~$(ii)$
in the beginning of Section~\ref{sec:comppoi}.

For the continuity of $E(t)$, it suffices to show that 
the series,
\begin{eqnarray}
E(t):=E[-\log \cp{\pbar}(W_t)]=
-\sum_{x=0}^\infty C_Q\bp{p_t}(x)\log C_Q\bp{\pbar}(x),
\label{eq:Eseries}
\end{eqnarray}
converges uniformly. The tail assumption on $Q$ implies
that, for all $x\geq N_0$,
$$1 \geq \cp{\pbar}(x) = \sum_{y=0}^n \bp{\pbar}(y) \qst{y}(x)
\geq \lambda(1-\lambda/n)^{n-1} Q(x)
\geq \lambda(1-\lambda/n)^{n-1} \rho^{x^\beta},$$
so that,
\begin{equation}
0\leq -\log \cp{\pbar}(x)\leq Cx^\beta,
\label{eq:logQ}
\end{equation}
for an appropriate constant $C>0$.
Then, for $N\geq N_0$, the tail of the series
(\ref{eq:Eseries}) can be bounded,
$$0\leq -\sum_{x=N}^\infty C_Q\bp{p_t}(x)\log C_Q\bp{\pbar}(x)
\leq C E[ W^\beta_t{\mathbb I}_{\{W_t\geq N\}}]
\leq \frac{C}{N}E[W_t^{\beta+1}]
\leq \frac{C}{N}C_nq_{\beta+1},
$$
where the last inequality follows from (\ref{eq:moment}).
This obviously converges to zero, uniformly
in $t$, therefore $E(t)$ is continuous.

For the differentiability of $E(t)$, 
note that the summands in (\ref{eq:series}) are
continuously differentiable (by Lemma~\ref{lem:partials}),
and that the series of derivatives converges uniformly
in $t$; to see that, for $N\geq N_0$ we apply
Lemma~\ref{lem:partials} together with the bound
(\ref{eq:logQ}) to get,
\begin{eqnarray*}
\lefteqn{
	\left| \sum_{x=N}^{\infty} \frac{\partial}{\partial t} 
	\cp{p_t}(x) \log \cp{\pbar}(x) \right| 
	} \\
& \leq & 
	2 t \sum_{x=N}^{\infty} 
	\sum_{y=0}^n \bp{\wt{p}}(y)
	\left( Q^{*(y+2)}(x) + 2 Q^{*(y+1)}(x) + Q^{*y}(x) \right) 
	Cx^\beta\\
& \leq & 
	2 C t 
	\sum_{y=0}^n 
	\sum_{x=N}^{\infty} 
	\left( Q^{*(y+2)}(x) + 2 Q^{*(y+1)}(x) + Q^{*y}(x) \right) 
	x^\beta,
\end{eqnarray*}
which is again easily seen to converge to zero
uniformly in $t$ as $N\to\infty$, since
$Q$ has finite moments of all orders.
This establishes the differentiability of $E(t)$
and justifies the term-by-term differentiation
of the series (\ref{eq:series}); the rest of 
the proof that $E'(t)\leq 0$ is the same as in case~$(a)$.
\end{proof}

Note that, as with Proposition~\ref{prop:deriv}, 
the above proof only requires that the compound
binomial distribution $\mbox{CBin}(n,\lambda/n,Q)=\cp{\pbar}$ 
satisfies a property weaker than log-concavity, namely 
that the function,
$\log \cp{\pbar}(x) - \sum_v Q(v) \log \cp{\pbar}(x+v),$
be increasing in $x$.

%


\begin{proof}{\bf of Theorem~\ref{thm:mainber}}
Assume, without loss of generality,
that $n\geq 2$. If $p_1 > p_2$, then 
Proposition~\ref{prop:deriv2}
says that, $E((p_1-p_2)/2)\leq E(0)$, that is,
$$ -\sum_{x=0}^{\infty} 
\cp{p}(x) \log \cp{\pbar}(x)
\leq - \sum_{x=0}^{\infty} 
\cp{p^*}(x) \log \cp{\pbar}(x),$$
where $\vc{p^*} = ((p_1 + p_2)/2, (p_1 + p_2)/2, p_3, \ldots p_n)$
and $\vc{\pbar} = (\lambda/n,\ldots,\lambda/n)$.
Since the expression
$\sum_{x=0}^{\infty} \cp{p_t}(x) \log \cp{\pbar}(x)$
is invariant under permutations of the elements of 
the parameter vectors,
we deduce that it is maximized
by $\vc{p_t} = \pbar$. Therefore,
using, as before, the nonnegativity
of the relative entropy,
\begin{eqnarray*}
H(\cp{p})
&\leq&
	H(\cp{p}) + D(\cp{p} \| \cp{\pbar})\\
& = &
	-\sum_{x=0}^{\infty} 
	\cp{p}(x) \log \cp{\pbar}(x) \\
&\leq&
	-\sum_{x=0}^{\infty}
	\cp{\pbar}(x) \log \cp{\pbar}(x)\\
& = & 
 H( \cp{\pbar} )
\;=\; H(\mbox{CBin}(n,\lambda/n,Q)),
\end{eqnarray*}
as claimed.
\end{proof}

\newpage

\section{Conditions for Log-Concavity} \label{sec:lccond}

Theorems~\ref{thm:mainpoi} and~\ref{thm:mainber} state that 
log-concavity is a sufficient condition for 
compound binomial and compound Poisson distributions 
to have maximal entropy within a natural class. 
Here we give examples of when
log-concavity holds; if the results in this section 
can be strengthened (in particular, 
if Conjecture~\ref{conj:lcconj} can be proved), 
then the class of maximum entropy distributions 
will be accordingly widened.

Below we show that a compound Bernoulli sum  
is log-concave if the parameters are sufficiently 
large, and that compound Bernoulli sums and compound 
Poisson distributions are log-concave if $Q$ is either
supported only on the set $\{1,2 \}$ or is geometric. 


\begin{lemma} \label{lem:lc}
Suppose $Q$ is a log-concave distribution on $\Nat$.

{\em (i)}
The compound Bernoulli distribution
$\cbern{p}{Q}$ is log-concave if and only if
$p \geq \frac{1}{1 + Q(1)^2/Q(2)}$.

{\em (ii)} The compound Bernoulli sum distribution
$\cp{p}$ is log-concave as along as
all the elements $p_i$
of the parameter vector $\vc{p}=(p_1,p_2,\ldots,p_n)$
satisfy $p_i \geq \frac{1}{1 + Q(1)^2/Q(2)}$.
\end{lemma}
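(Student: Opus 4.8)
The plan is to reduce the log-concavity of $\cbern{p}{Q}$ to checking the single instance of (\ref{eq:lcdef}) at $x=1$, since every other instance will follow automatically from the log-concavity of $Q$. Writing $R=\cbern{p}{Q}$, recall from the Example that $R(0)=1-p$ and $R(x)=pQ(x)$ for $x\geq 1$. For each $x\geq 2$ all three of $R(x-1),R(x),R(x+1)$ equal $p$ times the corresponding value of $Q$, so the required inequality $R(x)^2\geq R(x-1)R(x+1)$ collapses to $Q(x)^2\geq Q(x-1)Q(x+1)$, which holds by the assumed log-concavity of $Q$ and imposes no constraint on $p$. Thus the only instance that can fail is the one at $x=1$, namely $R(1)^2\geq R(0)R(2)$.

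I would then write out this $x=1$ inequality explicitly as $(pQ(1))^2\geq (1-p)\,pQ(2)$ and, for $p>0$, divide through by $p$ to obtain $pQ(1)^2\geq (1-p)Q(2)$. Rearranging gives $p\,(Q(1)^2+Q(2))\geq Q(2)$, that is, $p\geq Q(2)/(Q(1)^2+Q(2))=1/(1+Q(1)^2/Q(2))$, which is exactly the stated threshold; together with the previous paragraph this establishes the ``if and only if'' of part~(i) for $0<p\leq 1$. The point requiring care is the support condition built into the definition of log-concavity: log-concavity of $Q$ forces its support to be a contiguous block of integers, and I would check that when $Q(1)>0$ the support of $R$ is a contiguous interval containing $0$; the degenerate endpoints ($p=1$, where $R=Q$, and $p=0$, where $R$ is a point mass) are handled directly.

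For part~(ii), the plan is to exploit the fact that the compound Bernoulli sum $\cp{p}$ is, by its definition as the law of $\sum_{i=1}^n B_iX_i$, the convolution of the $n$ compound Bernoulli distributions $\cbern{p_i}{Q}$. Under the hypothesis $p_i\geq 1/(1+Q(1)^2/Q(2))$ for every $i$, part~(i) guarantees that each convolution factor is log-concave. I would then invoke the classical fact that a convolution of log-concave distributions on $\Z$ is again log-concave---the same result of \cite{karlin3} already used in the proof of Lemma~\ref{lem:decsc} to conclude that $\qst{n}$ is log-concave---and finish by an immediate induction on $n$.

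I do not anticipate a serious obstacle: the computation in part~(i) is elementary, and the convolution step in part~(ii) is a standard structural property invoked via \cite{karlin3}. The only genuinely delicate points are bookkeeping ones---verifying that the support of $\cbern{p}{Q}$ is a true interval of successive integers (hence that $Q(1)>0$ in the nondegenerate range) and correctly disposing of the endpoints $p=0$ and $p=1$---rather than any substantive analytic difficulty.
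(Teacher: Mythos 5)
Your proposal is correct and follows essentially the same route as the paper: reduce part~(i) to the single inequality $R(1)^2\geq R(0)R(2)$, i.e.\ $(pQ(1))^2\geq(1-p)pQ(2)$ (the remaining cases being inherited from the log-concavity of $Q$), and deduce part~(ii) from the closure of log-concavity under convolution via \cite{karlin3}. Your extra attention to the support condition and the endpoints $p=0,1$ is care the paper's terser proof omits, but it does not change the argument.
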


\begin{proof} Let $Y$ have distribution
$\cbern{p}{Q}$. Since $Q$ is log-concave
itself, the log-concavity of
$\cbern{p}{Q}$ is equivalent to 
the inequality,
$ \Pr(Y=1)^2 \geq \Pr(Y=2) \Pr(Y=0)$, 
which states that,
$ (p Q(1))^2 \geq (1-p) p Q(2)$, 
and this is exactly the assumption
of~(i).

The assertion in~(ii) follows from~(i),
since the sum of independent log-concave
random variables is log-concave; see, e.g., \cite{karlin3}.
\end{proof}

Next we examine conditions under which a compound
Poisson measure is log-concave. Our argument is based,
in part, on the some of the ideas in
Johnson and Goldschmidt
\cite{johnson17}, and also
in Wang and Yeh \cite{wang3},
where transformations that 
preserve log-concavity are studied.

Note that, unlike for the Poisson distribution, 
it is not the case that every compound Poisson 
distribution CPo$(\lambda,Q)$
is log-concave.
Indeed, for any distribution $P$, considering 
the difference,
$C_Q P(1)^2 - C_Q P(0) C_Q P (2)$,
shows that a necessary condition for $C_Q P$ 
to be log-concave is that,
\begin{equation} \label{eq:nec2}
(P(1)^2 - P(0) P(2))/P(0) P(1) \geq Q(2)/Q(1)^2. \end{equation}
Taking $P$ to be the Po$(\lambda)$
distribution, a necessary condition for CPo$(\lambda,Q)$
to be log-concave is that,
\begin{equation} \label{eq:nec}
\lambda \geq \frac{2 Q(2)}{Q(1)^2},
\end{equation} 
while for $P=\bp{p}$, a necessary condition
for the compound Bernoulli sum 
$C_Q\bp{p}$ to be log-concave is,
$$ \sum_i \frac{ p_i}{1-p_i} + \left(\sum_i \frac{p_i^2}{(1-p_i)^2} \right)
\left(\sum_i \frac{p_i}{1-p_i} \right)^{-1} \geq \frac{2 Q(2)}{Q(1)^2},$$
which, by Jensen's inequality, will hold as long as,
$\sum_i p_i \geq 2 Q(2)/Q(1)^2$.

\begin{theorem} \label{thm:q2pt}
Let $Q$ be a distribution supported on the set $\{ 1, 2 \}$. 

{\em (i)} The compound Poisson distribution {\em CPo}$(\lambda,Q)$ 
is log-concave for all $\lambda \geq \frac{2 Q(2)}{Q(1)^2}$.

{\em (ii)} The distribution $C_Q P$ is log-concave
for any ultra log-concave distribution $P$
with support on $\{0,1,\ldots,N\}$
(where $N$ may be infinite),
which satisfies,
$(x+1) P(x+1)/P(x) \geq
2 Q(2)/Q(1)^2$ for all $x=0,1,\ldots,N$.
\end{theorem}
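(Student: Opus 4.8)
The plan is to treat part (i) as the special case $P=\Pi_\lambda$ of part (ii): the Poisson law is ultra log-concave, and since $(x+1)\Pi_\lambda(x+1)/\Pi_\lambda(x)=\lambda$, the hypothesis of (ii) becomes exactly $\lambda\geq 2Q(2)/Q(1)^2$. So I would prove only (ii). Write $q=Q(1)$, $r=Q(2)=1-q$, and $c=2Q(2)/Q(1)^2=2r/q^2$. Because $Q$ is supported on $\{1,2\}$, its convolution powers are explicit, $Q^{*y}(x)=\binom{y}{x-y}q^{2y-x}r^{\,x-y}$, which is nonzero precisely when $\lceil x/2\rceil\leq y\leq x$; hence $g(x):=C_QP(x)=\sum_{y=\lceil x/2\rceil}^{x}P(y)\binom{y}{x-y}q^{2y-x}r^{\,x-y}$ is a finite sum, and the goal is to verify $g(x)^2\geq g(x-1)g(x+1)$ for all $x\geq 1$ in the manner of~(\ref{eq:lcdef}).

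The key preliminary move is the substitution $b(y):=y!\,P(y)$. Under it, ultra log-concavity~(\ref{eq:ulcdef}) of $P$ becomes ordinary log-concavity of $b$, namely $b(y)^2\geq b(y-1)b(y+1)$, while the hypothesis $(y+1)P(y+1)/P(y)\geq c$ becomes the growth bound $b(y+1)\geq c\,b(y)$ for all $y$ in the support. After expanding $g$ and factoring out $q^{x}$ (which is log-affine and so irrelevant to log-concavity), the problem is equivalent to proving log-concavity of $\tilde g(x)=\sum_{m+2k=x}\frac{t^{k}}{m!\,k!}\,b(m+k)$, where $t=r/q^{2}$, the sum running over $m,k\geq 0$ with $m+2k=x$ (here $m$ counts the compounded $1$'s and $k$ the compounded $2$'s). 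The decisive arithmetic feature to keep in view is that $c=2t$.

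I would first display the mechanism on the first nontrivial case $x=1$, where $g(1)^2\geq g(0)g(2)$ reads $q^{2}P(1)^2\geq P(0)\bigl(rP(1)+q^{2}P(2)\bigr)$. Ultra log-concavity gives $q^{2}P(0)P(2)\leq\tfrac12 q^{2}P(1)^2$, and the growth hypothesis (at $y=0$) gives $rP(0)P(1)\leq\tfrac12 q^{2}P(1)^2$; adding the two yields the claim, and this is exactly the computation behind the necessary condition~(\ref{eq:nec2}). Thus the log-concavity defect splits into a piece absorbed by log-concavity of $b$ and a piece absorbed by the growth bound, and the factor $2$ in $c$ is precisely what allows each piece to claim half of $q^{2}P(1)^2$; conceptually it reflects the two ways the single ``extra unit'' separating levels $x-1$ and $x+1$ can be carried.

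For general $x$ I would expand $\tilde g(x)^2-\tilde g(x-1)\tilde g(x+1)$ as a double sum over a decomposition pair summing to $(x,x)$ versus a pair summing to $(x-1,x+1)$, and match each term of the $(x-1,x+1)$ sum to a term of the $(x,x)$ sum by transferring one unit from the higher block down to the lower one: either moving a ``$1$'' ($m_2\!\downarrow,\,m_1\!\uparrow$) or performing a ``$1,1\leftrightarrow 2$'' exchange. The balanced transfers are dominated using log-concavity of $b$ (shifting mass toward equal arguments increases the product $b(\cdot)b(\cdot)$), just as with the $P(0)P(2)$ term above, while the exchanges trading two $1$'s for one $2$ are dominated using $b(y+1)\geq 2t\,b(y)$, just as with the $P(0)P(1)$ term. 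I expect the main obstacle to be exactly the bookkeeping of this matching: the weights $1/(m!\,k!)$ generate factors such as $m_2/(m_1+1)$ that are not individually favourable, so the comparison cannot be carried out purely term-by-term but must be organised into groups (or the transferred unit routed to whichever block carries the smaller $b$-ratio), with special care at the boundary of the summation range ($k=0$ and small $m$, where terms drop out). Verifying that this matching is well defined and sign-correct throughout, with the factor $2$ in the hypothesis making the two competing contributions balance in the worst case, is the crux of the argument.
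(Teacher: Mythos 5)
Your setup coincides exactly with the paper's: the reduction of (i) to (ii) via $P=\Pi_\lambda$, the substitution $b(y)=y!\,P(y)$ (the paper's $R$), the explicit convolution powers $Q^{*y}(x)=\binom{y}{x-y}Q(1)^{2y-x}Q(2)^{x-y}$, and the verification at $x=1$ (which the paper dispatches via the condition~(\ref{eq:nec2})) are all as in the paper. But from that point on your argument is only a plan, and its crux is precisely the step you concede you cannot carry out: you propose to match each term of $\tilde g(x-1)\tilde g(x+1)$ to a term of $\tilde g(x)^2$ by ``transferring one unit,'' and then acknowledge that the factorial weights produce factors such as $m_2/(m_1+1)$ that are not individually favourable, so that the comparison ``cannot be carried out purely term-by-term but must be organised into groups.'' How to organise those groups is the entire content of the theorem; without it, nothing beyond $x=1$ is proved.

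The paper's resolution has a structure your matching plan does not anticipate, which is worth noting because it shows why a purely local transfer argument is the wrong shape. The paper groups the expansion of $C_QP(x)^2-C_QP(x+1)C_QP(x-1)$ by the diagonal $r=y+z$ and applies summation by parts within each diagonal, in the style of \cite{johnson17} and \cite{wang3}: the inner sum for fixed $r$ is rewritten as a sum of products of differences such as $R(t+s)R(t-s)-R(t+s+1)R(t-s-1)$ (nonnegative by ultra log-concavity of $P$, i.e.\ log-concavity of $R$) with partial sums of binomial-coefficient differences, whose nonnegativity is a separate combinatorial fact (Lemma~\ref{lem:tech2}), itself proved by a unimodality argument for the coefficients of $(1+T^2)^{2x-r}(1+T)^{2r-2x}$ (via strong unimodality of the binomial) together with a sign-change analysis of increments. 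Crucially, every diagonal except the single exceptional one $r=x$ with $x$ odd is nonnegative using \emph{only} ultra log-concavity; the growth hypothesis $(x+1)P(x+1)/P(x)\geq 2Q(2)/Q(1)^2$ enters exactly once, to show that the negative contribution $-p^xR(t)R(t+1)/((t+1)!\,t!)$ of the diagonal $r=x=2t+1$ is absorbed by the neighbouring diagonal $r=x+1$. So the two hypotheses are not interleaved locally, as in your plan where every ``$1,1\leftrightarrow 2$'' exchange invokes the growth bound: one hypothesis controls all diagonals but one, and the other pays for that single diagonal. Your proposal identifies the correct ingredients but not this mechanism, and the mechanism is the proof.
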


Note that, the second condition in~(ii) is equivalent to requiring 
that $ NP(N)/P(N-1) \geq 2Q(2)/Q(1)^2$ if $N$ is finite, 
or that $\lim_{x \tends} (x+1) P(x+1)/P(x) 
\geq 2 Q(2)/Q(1)^2$ if $N$ is infinite.

\begin{proof}
Writing $R(y) = y! P(y)$, we know that
$ C_Q P(x) = \sum_{y = 0}^x R(y) \left( Q^{*y}(x)/y! \right).$
Hence, the log-concavity of $C_QP(x)$ is equivalent 
to showing that, 
\begin{equation}
\sum_r \frac{ Q^{*r}(2x)}{r!}
\sum_{y+z = r} R(y) R(z) \binom{r}{y} 
\left( \frac{ Q^{*y}(x) Q^{*z}(x)}{Q^{*r}(2x)}
- \frac{ Q^{*y}(x+1) Q^{*z}(x-1)}{Q^{*r}(2x)} \right)\geq 0,
\label{eq:toabel}
\end{equation}
for all $x\geq 2$, 
since the case of $x = 1$ was 
dealt with previously
by equation~(\ref{eq:nec2}).
In particular, for~(i), taking
$P=\mbox{Po}(\lambda)$, 
it suffices to show 
that for all $r$ and $x$, the 
function,
$$ g_{r,x}(k) := 
\sum_{y+z = r} \binom{r}{y} \frac{ Q^{*y}(k) Q^{*z}(2x-k)}{Q^{*r}(2x)} $$
is unimodal as a function of $k$ 
(since $g_{r,x}(k)$ is symmetric about $x$).

In the general case~(ii), writing $Q(2)=p=1-Q(1)$, we have,
$Q^{*y}(x) = \binom{y}{x-y} p^{x-y} (1-p)^{2y-x}$,
so that,
\begin{equation} \label{eq:exact}
 \binom{r}{y} \frac{ Q^{*y}(k) Q^{*z}(2x-k)}{Q^{*r}(2x)} = \binom{2x-r}{k-y} \binom{2r - 2x}{2y-k},
\end{equation}
for any $p$.
Now, following
\cite[Lemma~2.4]{johnson17} and 
\cite[Lemma~2.1]{wang3}, we use summation by parts 
to show that the inner 
sum in (\ref{eq:toabel})
is positive for each $r$ (except for $r=x$ when $x$ is odd), 
by case-splitting according to the parity of $r$.

(a) For $r = 2t$, we rewrite the inner sum 
of equation~(\ref{eq:toabel}) as,
\begin{eqnarray*} 
\lefteqn{ \sum_{s = 0}^t ( R(t+s) R(t-s) - R(t+s+1) R(t-s-1) ) \times  } \\
& & \left( \sum_{y = t-s}^{t+s} \left( \binom{2x-r}{x-y} \binom{2r-2x}{2y-x} -
\binom{2x-r}{x+1-y} \binom{2r-2x}{2y-x-1} \right) \right), \end{eqnarray*}
where the first term in the
above product is positive by the ultra log-concavity of $P$ 
(and hence log-concavity of $R$), and the second term is positive 
by Lemma~\ref{lem:tech2} below.

(b) Similarly, for $x\neq r = 2t+1$, we rewrite the inner 
sum of equation~(\ref{eq:toabel}) as,
\begin{eqnarray*}  \lefteqn{ \sum_{s = 0}^t ( R(t+s+1) R(t-s) - R(t+s+2) R(t-s-1)) \times  } \\
& &  \left( \sum_{y = t-s}^{t+1+s} \left( \binom{2x-r}{x-y} \binom{2r-2x}{2y-x} -
\binom{2x-r}{x+1-y} \binom{2r-2x}{2y-x-1} 
\right) \right),\end{eqnarray*}
where the first term in the product
is positive by the ultra log-concavity of $P$ 
(and hence log-concavity of $R$) and the second term 
is positive by 
Lemma~\ref{lem:tech2} below.

(c) Finally, in the case of $x=r=2t+1$, 
substituting $k = x$ and $k = x+1$
in (\ref{eq:exact}), combining the resulting
expression with
(\ref{eq:toabel}), and noting
that $\binom{2r-2x}{u}$ is 1 if and only if $u=0$
(and is zero, otherwise), we see that the 
inner sum becomes,
$- R(t+1) R(t) \binom{2t+1}{t}$,
and the summands in 
(\ref{eq:toabel}) reduce to,
$$ - \frac{ {p^x} R(t) R(t+1)}{(t+1)! t!}.$$
However, the next term in the outer sum of equation~(\ref{eq:toabel}), 
$r = x+1$, gives
\begin{eqnarray*}
\lefteqn{ \frac{ p^{x-1} (1-p)^2}{2(2t)!} 
\left[ R(t+1)^2 \left( 2 \binom{2t}{t} - \binom{2t}{t+1} \right)
- R(t) R(t+2) \binom{2t}{t} \right] } \\
& \geq & \frac{ p^{x-1} (1-p)^2}{2 (2t)!} 
R(t+1)^2 \left(  \binom{2t}{t} - \binom{2t}{t+1} \right)
= \frac{p^{x-1} (1-p)^2}{ 2(t+1)! t!} R(t+1)^2.
\end{eqnarray*}
Hence, the sum of the first two terms is positive (and hence the whole sum
is positive) if $R(t+1) (1-p)^2/(2p) \geq R(t)$.

If $P$ is Poisson($\lambda$), this simply reduces to equation~(\ref{eq:nec}), 
otherwise we use the fact
that $R(x+1)/R(x)$ is decreasing.
\end{proof}

\begin{lemma} \label{lem:tech2}

{\rm (a)} If $r = 2t$, for any $0 \leq s \leq t$, the sum,
$$ \sum_{y = t-s}^{t+s} \left( \binom{2x-r}{x-y} \binom{2r-2x}{2y-x} -
\binom{2x-r}{x+1-y} \binom{2r-2x}{2y-x-1} \right) 
\geq 0.$$

{\em (b)} If $x\neq r = 2t+1$, for any $0 \leq s \leq t$, the sum,
$$ \sum_{y = t-s}^{t+1+s} \left( \binom{2x-r}{x-y} \binom{2r-2x}{2y-x} -
\binom{2x-r}{x+1-y} \binom{2r-2x}{2y-x-1} 
\right) \geq 0.$$
\end{lemma}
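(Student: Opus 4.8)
The plan is to reduce both parts to a one-sided partial-sum positivity for a single sequence built from the second differences of a unimodal Vandermonde sequence, and then to establish that positivity by a summation-by-parts argument in the spirit of \cite[Lemma~2.4]{johnson17} and \cite[Lemma~2.1]{wang3}. Write $M=2x-r$ and $N=2r-2x$, and for each $y$ introduce the Vandermonde sequence $\phi_y(a)=\binom{M}{a}\binom{N}{y-a}$, so that the two binomial products in the lemma are $\binom{M}{x-y}\binom{N}{2y-x}=\phi_y(x-y)$ and $\binom{M}{x+1-y}\binom{N}{2y-x-1}=\phi_y(x-y+1)$, both lying on the anti-diagonal $a+b=y$. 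Being a product of two log-concave sequences in $a$, each $\phi_y$ is log-concave, hence unimodal. A direct check shows that the first product is invariant under $y\mapsto r-y$ (so it is symmetric about the window centre, namely $t$ in case (a) and $t+\tfrac12$ in case (b)), while the reflected second product equals $\binom{M}{x-1-y}\binom{N}{2y-x+1}=\phi_y(x-y-1)$. Replacing the second product by the average of itself and its reflection, the sum in the lemma becomes $\tfrac12\sum_{y}\delta_y$, where $\delta_y:=2\phi_y(x-y)-\phi_y(x-y+1)-\phi_y(x-y-1)$ is exactly the negated second difference of the unimodal sequence $\phi_y$ at the point $x-y$.

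First I would record the telescoping obtained by differencing the partial sums in the window size $s$: enlarging the symmetric window by one step adds precisely the outer terms, so each partial sum equals a one-sided cumulative sum of the $\delta$'s. It then suffices to show that all these cumulative sums are nonnegative, which I would deduce from three ingredients: (i) the initial cumulative sum is nonnegative, by a termwise comparison in which the central binomial coefficients dominate their neighbours; (ii) the total sum is nonnegative, which is the maximal-window endpoint and, via the generating function $F(w)=(1+w^2)^{M}(1+w)^{N}$, equals $[w^{x}]F-[w^{x+1}]F$, the assertion that the central coefficient of $F$ dominates its neighbour (using that $F$ is palindromic of degree $2x$ about $x$, so $[w^{x+1}]F=[w^{x-1}]F$); and (iii) the sequence $(\delta_{t+k})_k$ changes sign at most once, from nonnegative to nonpositive. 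Granting (iii), the cumulative sums first increase and then decrease, so their minimum is attained at one of the two endpoints, and (i) and (ii) complete the argument.

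The main obstacle is step (iii). The sign of $\delta_{t+k}$ is governed by the local convexity of $\phi_{t+k}$ at the sliding evaluation point $x-t-k$: as $k$ grows this point moves left while the mode of $\phi_{t+k}$, located near $M(t+k)/r$, moves right, so one expects the second difference to pass from the concave central region (where $\delta_y\ge 0$) into the convex left tail (where $\delta_y\le 0$) exactly once. Making this rigorous is delicate precisely because both the sequence $\phi_{t+k}$ and the point at which it is evaluated vary with $k$; I would handle it by the summation-by-parts reorganization of \cite{johnson17,wang3}, writing $\delta_y$ as a first difference of $\phi_y$ and resumming along the lattice so that the cross terms collapse to a single, manifestly signed boundary binomial product. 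The two parities are treated separately, as in the statement: case (a) uses the exact symmetry of $\phi_y(x-y)$ about $t$, while case (b), with $r=2t+1$ and the window centred at $t+\tfrac12$, uses the reflection $y\mapsto 2t+1-y$; the degenerate case $x=r$ is excluded here, since it is dealt with directly in the proof of Theorem~\ref{thm:q2pt}. Throughout, the out-of-range binomial coefficients at the window boundaries vanish and so only simplify the boundary terms.
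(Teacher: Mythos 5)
Your reduction is sound and in fact mirrors the paper's underlying structure: symmetrizing over $y\mapsto r-y$ turns the window sums into one-sided cumulative sums of $\delta_{t+k}$ (these are exactly the ``increments'' the paper works with), and the endpoint-plus-unimodality scheme --- smallest window, largest window, single sign change --- is precisely how the paper, following Johnson--Goldschmidt and Wang--Yeh, concludes; your step (i) even makes explicit a check the paper leaves implicit. The problem is that both substantive ingredients are asserted rather than proved. For (ii), palindromicity of $F(w)=(1+w^2)^{M}(1+w)^{N}$ only gives $[w^{x+1}]F=[w^{x-1}]F$; it does not make the central coefficient dominate, since a symmetric sequence can dip in the middle. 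The missing ingredient is \emph{unimodality} of the coefficients of $F$, which the paper obtains from Keilson and Gerber's statement $S_1$ (strong unimodality of the binomial): $(1+w^2)^{M}(1+w)$ has unimodal coefficients (each $\binom{M}{k}$ appears twice in a row), and convolving with $(1+w)^{N-1}$ preserves unimodality, provided $N\geq 1$. This also exposes a misstep in your treatment of the degenerate case: $x=r$ (i.e.\ $N=0$) is excluded only in part (b) of the lemma, \emph{not} in part (a), so it cannot be shunted off to the proof of Theorem~\ref{thm:q2pt}; in part (a) one must check it directly ($F=(1+w^2)^{r}$ with $r=2t$ even gives $[w^{x}]F-[w^{x+1}]F=\binom{2t}{t}>0$). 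A smaller point: in case (b) the evaluation point of $\phi_{t+1}$ is off-centre by one, so even your step (i) needs a short computation rather than pure termwise domination of central coefficients.

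The more serious gap is (iii), which you correctly identify as the main obstacle but then only gesture at: ``summation-by-parts reorganization \ldots so that the cross terms collapse to a single, manifestly signed boundary binomial product'' is not a proof, and summation by parts is not the right tool for a sign-change statement --- in this development it is used one level up, in the proof of Theorem~\ref{thm:q2pt}, to reduce the log-concavity inequality to the present lemma. What actually establishes (iii) is a direct computation: the increment $\delta_{t+s+1}$ factors as a nonnegative binomial product times
$$ 2 - \frac{(x-t-s)(2t-2s-x)}{(x+1-t+s)(2t+2s-x+1)} - \frac{(x-t+s)(2t+2s-x)}{(x+1-t-s)(2t-2s-x+1)} $$
in case (a) (with the analogous expression in case (b)), and placing this over a common denominator yields a numerator of the form $a-bs^{2}$, which changes sign at most once, from $+$ to $-$, as $s$ increases. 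Without this (or some substitute) computation, your argument delivers only the reduction; (iii) is exactly where the content of the lemma lies, so as it stands the proof is incomplete.
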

\begin{proof} The proof is in two stages; first we show that 
the sum is positive for $s = t$, then we show that there 
exists some $S$ such that, as $s$ increases,
the increments are positive for $s \leq S$ 
and negative for $s > S$. The result
then follows, as in \cite{johnson17} or \cite{wang3}.

For both (a) and (b), note that for $s =t$, 
equation~(\ref{eq:exact}) implies that 
the sum is the difference between the coefficients of 
$T^x$ and $T^{x+1}$ in $f_{r,x}(T) = (1+T^2)^{2x-r} (1+T)^{2r-2x}$.
Since $f_{r,x}(T)$ has degree $2x$ and has coefficients 
which are symmetric about $T^x$,
it is enough to show that the coefficients 
form a unimodal sequence. Now, $(1+T^2)^{2x-r} (1+T)$ has
coefficients which do form a unimodal sequence. 
Statement $S_1$ of Keilson and Gerber \cite{keilson} states
that any binomial distribution is strongly unimodal, 
which means that it preserves unimodality on
convolution. This means that 
$(1+T^2)^{2x-r} (1+T)^{2r-2x}$ is unimodal if $r-x \geq 1$,
and we need only check the case $r=x$, when $f_{r,x}(T) = (1+T^2)^r$.
Note that if $r = 2t$ is even, the difference between 
the coefficients of $T^{x}$ and
$T^{x+1}$ is $\binom{2t}{t}$, which is positive.

In part (a), the increments are equal to
$\binom{2x-2t}{x-t+s} \binom{4t-2x}{2t-2s-x}$ multiplied
by the expression,
\begin{eqnarray*}
 2 - \frac{ (x-t-s)(2t-2s-x)}{(x+1-t+s)(2t+2s-x+1)}
- \frac{ (x-t+s)(2t+2s-x)}{(x+1-t-s)(2t-2s-x+1)},
\end{eqnarray*}
which is positive for $s$ small and negative for $s$ large, since 
placing the term in brackets over a common denominator, 
the numerator is of the form $(a-bs^2)$.

Similarly, in part (b), the increments equal
$\binom{2x-2t-1}{x-t+s} \binom{4t+2-2x}{2t-2s-x} $ times
the expression,
\begin{eqnarray*}
 2 - \frac{ (x-t-s-1)(2t-2s-x)}{(x+1-t+s)(2t+2s-x+3)}
- \frac{ (x-t+s)(2t+2+2s-x)}{(x-t-s)(2t+1-2s-x)},
\end{eqnarray*}
which is again positive for $s$ small and negative for $s$ large.
\end{proof}

\begin{theorem} \label{thm:qgeom}
Let $Q$ be a geometric distribution
on $\Nat$. Then $C_Q P$ is log-concave
for any distribution $P$ which is log-concave
and satisfies the condition~{\em (\ref{eq:nec2})}.
\end{theorem}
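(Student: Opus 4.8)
The plan is to exploit the fact that, for a geometric $Q$ with parameter $\alpha$, the convolution powers are explicit negative binomial masses, $Q^{*y}(x) = \binom{x-1}{y-1}\alpha^y(1-\alpha)^{x-y}$ for $x\geq y\geq 1$. Substituting this into the mixture representation~(\ref{eq:compdis}) and pulling out a common geometric factor, I would write, for $x\geq 1$,
$$
C_Q P(x) = \alpha(1-\alpha)^{x-1} A(x),
\qquad
A(x) := \sum_{y=1}^{x} P(y)\binom{x-1}{y-1}\theta^{\,y-1},
\qquad
\theta := \frac{\alpha}{1-\alpha}.
$$
The purpose of this normalization is that in the log-concavity inequality $C_QP(x)^2 \geq C_QP(x+1)C_QP(x-1)$ the geometric prefactors cancel exactly for every $x\geq 2$, reducing that inequality to the single statement that $A$ is log-concave, i.e.\ $A(x)^2\geq A(x+1)A(x-1)$ for $x\geq 2$. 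The boundary case $x=1$ I would treat separately: using $C_QP(0)=P(0)$, $C_QP(1)=\alpha P(1)$ and $C_QP(2)=\alpha(1-\alpha)P(1)+\alpha^2 P(2)$, the inequality $C_QP(1)^2\geq C_QP(0)C_QP(2)$ becomes, after dividing by $\alpha$, precisely condition~(\ref{eq:nec2}) with $Q(2)/Q(1)^2=(1-\alpha)/\alpha$. This is the only point at which the hypothesis~(\ref{eq:nec2}) enters.

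After reindexing $j=y-1$ and setting $B(m):=A(m+1)=\sum_{j=0}^{m}\binom{m}{j}c_j$ with $c_j:=P(j+1)\theta^{\,j}$, the remaining task is to show that the binomial transform $m\mapsto B(m)$ of the sequence $(c_j)$ is log-concave. Here $(c_j)$ is itself log-concave with no internal zeros, being the product of the log-concave sequence $P(j+1)$ (as $P$ is log-concave) and the log-linear sequence $\theta^{\,j}$. Thus the geometric case reduces to a single clean assertion: the binomial transform preserves log-concavity.

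To prove that assertion, which is the technical core, I would proceed exactly as in Theorem~\ref{thm:q2pt}. Writing
$$
B(x)^2 - B(x+1)B(x-1)
= \sum_{j,k}\Big[\binom{x}{j}\binom{x}{k}-\binom{x+1}{j}\binom{x-1}{k}\Big]c_j c_k,
$$
I would symmetrize the kernel in $(j,k)$ and group the terms into pairs sharing a common index-sum, so that on each pair the coefficient factorizes into an outer factor, controlled by the log-concavity of $(c_j)$, and an inner combinatorial factor built from differences of products of binomial coefficients. Summation by parts, following \cite[Lemma~2.4]{johnson17} and \cite[Lemma~2.1]{wang3}, then reduces positivity of the whole expression to the positivity of these inner binomial sums, which is the analogue of Lemma~\ref{lem:tech2}.

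The main obstacle is precisely establishing the positivity of those inner binomial-coefficient sums for the negative binomial kernel. As in the proof of Lemma~\ref{lem:tech2}, I expect this to follow from a unimodality argument: the relevant sums are coefficient differences of a product of the form $(1+T)^{a}(\cdots)^{b}$, which is unimodal by the strong unimodality of the binomial distribution (statement $S_1$ of Keilson and Gerber \cite{keilson}, invoked in the proof of Lemma~\ref{lem:tech2}), together with a two-sided increment estimate showing that the partial sums first increase and then decrease across the summation range. An alternative would be to cite directly that the binomial transform sends Pólya-frequency sequences of order two to log-concave sequences, a consequence of the total positivity of the Pascal kernel; I would nonetheless present the summation-by-parts version to keep the argument self-contained and consistent with the rest of Section~\ref{sec:lccond}.
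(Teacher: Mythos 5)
Your reduction is exactly the paper's: the same negative binomial formula for $Q^{*y}$, the same factoring out of the geometric envelope (up to a positive constant your $B(m)$ is the paper's sequence $C(x):=C_Q P(x+1)/(1-\alpha)^x$), the same disposal of the boundary case $x=1$ via condition~(\ref{eq:nec2}), and the same punchline: everything reduces to the assertion that the binomial transform $b_i=\sum_{j\le i}\binom{i}{j}a_j$ of a log-concave sequence is log-concave. The only divergence is in how that assertion is settled. The paper simply cites \cite[Theorem~7.3]{karlin3}, which is precisely this statement, and stops there. You instead propose to re-prove it by summation by parts in the style of Theorem~\ref{thm:q2pt} and Lemma~\ref{lem:tech2}, and you candidly leave the crux---positivity of the inner binomial-coefficient sums for the Pascal kernel---as something you ``expect'' to follow from a unimodality argument. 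Be aware that this step is exactly what the total positivity of the Pascal kernel (i.e., Karlin's theorem) encodes, so your preferred route amounts to re-deriving the cited result; it is plausible (the analogous program is carried out for the two-point kernel in Lemma~\ref{lem:tech2}, following \cite{johnson17} and \cite{wang3}) but it is not done, and it is the hard part rather than a routine adaptation. Since your stated fallback---citing the total-positivity/binomial-transform result directly---is literally the paper's proof, your proposal is correct as written; the self-contained version you sketch would buy independence from \cite{karlin3} and stylistic consistency with Section~\ref{sec:lccond}, at the cost of a genuinely nontrivial combinatorial lemma that the paper deliberately avoids proving.
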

\begin{proof} If $Q$ is geometric with mean $1/\alpha$, then,
$Q^{*y}(x) = \alpha^y (1-\alpha)^{x-y} \binom{x-1}{y-1}$,
which implies that,
$$ C_Q P(x) = \sum_{y=0}^x P(y) \alpha^y (1-\alpha)^{x-y} \binom{x-1}{y-1}.$$
Condition~(\ref{eq:nec2}) ensures 
that $C_Q P(1)^2 - C_Q P(0) C_Q P (2) \geq 0$, so,
taking $z = y-1$, we need only prove that the sequence,
$$ C(x) := C_Q P(x+1)/(1-\alpha)^x =  \sum_{z=0}^x P(z+1) \left(
\frac{\alpha}{1-\alpha} \right)^{z+1}  \binom{x}{z}$$
is log-concave.
However, this follows immediately from
\cite[Theorem~7.3]{karlin3}, which proves 
that if $\{a_i\}$ is a log-concave sequence, 
then so is $\{b_i\}$, defined by
$ b_i = \sum_{j=0}^i \binom{i}{j} a_j.$
\end{proof}

Finally, based on the discussion in the beginning
of this section, the above results, and some
calculations of the quantities,
$\CPi_{\lambda}(x)^2 - \CPi_{\lambda}(x-1) \CPi_{\lambda}(x+1)$ 
for small $x$,
we make the following conjecture:

\begin{conjecture} \label{conj:lcconj}
The compound Poisson measure {\em CPo}$(\lambda,Q)$ is log-concave,
as long as $Q$ is log-concave and 
$\lambda Q(1)^2\geq 2Q(2)$.  
\end{conjecture}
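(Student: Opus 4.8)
The plan is to verify the defining log-concavity inequality (\ref{eq:lcdef}) for $\CPi_\lambda$ directly, by expanding it through the mixture representation (\ref{eq:cppmf}) and then splitting the resulting double sum into a manifestly nonnegative part and a part whose sign must be controlled by the hypothesis $\lambda Q(1)^2\geq 2Q(2)$. First I would record that the quantity to be shown nonnegative is $\Delta(x):=\CPi_\lambda(x)^2-\CPi_\lambda(x-1)\CPi_\lambda(x+1)$ for each $x\geq 1$. The case $x=1$ is not really part of the problem: substituting $P=\Pi_\lambda$ into the necessary condition (\ref{eq:nec2}) shows that $\Delta(1)\geq 0$ is exactly equivalent to $\lambda Q(1)^2\geq 2Q(2)$, so it holds by assumption, and (\ref{eq:nec}) confirms this is the binding lowest-order constraint. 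Thus the entire content lies in the range $x\geq 2$.

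For $x\geq 2$, writing $c_y=\lambda^y/y!$ and using (\ref{eq:cppmf}), a short computation (identical to the one turning log-concavity into (\ref{eq:toabel}) in the proof of Theorem~\ref{thm:q2pt}, now with the log-linear weights $R(y)=y!\Pi_\lambda(y)=e^{-\lambda}\lambda^y$) reduces $e^{2\lambda}\Delta(x)\geq 0$ to
\[
\sum_{y,z\geq 0} c_y c_z\Big(\qst{y}(x)\qst{z}(x)-\qst{y}(x+1)\qst{z}(x-1)\Big)\geq 0 .
\]
Separating the diagonal $y=z$ from the off-diagonal terms, the diagonal contributes $\sum_y c_y^2\big(\qst{y}(x)^2-\qst{y}(x+1)\qst{y}(x-1)\big)$, which is nonnegative term-by-term because each convolution power $\qst{y}$ is log-concave (convolutions of the log-concave $Q$; cf.\ \cite{karlin3}). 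For the off-diagonal terms I would introduce the ratios $\rho_n(x):=\qst{n}(x+1)/\qst{n}(x)$ and note that, by the Keilson--Sumita inequality (\ref{eq:tech}), $\rho_n(x)$ is increasing in $n$, while log-concavity of $\qst{n}$ makes it decreasing in $x$. These two monotonicities give, for $y\leq z$, $\rho_y(x)\leq\rho_z(x)\leq\rho_z(x-1)$, i.e.\ $\qst{y}(x)\qst{z}(x)-\qst{y}(x+1)\qst{z}(x-1)\geq 0$; hence exactly half of the off-diagonal pairs are individually nonnegative. The remaining half, after relabelling, is $\sum_{y<z} c_y c_z\big(\qst{y}(x)\qst{z}(x)-\qst{z}(x+1)\qst{y}(x-1)\big)$, and here the two monotonicities pull in opposite directions, so its sign is genuinely indeterminate.

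Everything up to this point holds for \emph{every} log-concave $Q$ and \emph{every} $\lambda$, so the hypothesis $\lambda Q(1)^2\geq 2Q(2)$ must be what tames this last sum --- and this is the step I expect to be the real obstacle. The mechanism should be the one already visible in the lowest-order terms: for the pair $(y,z)=(1,2)$ at $x=2$ the ``bad'' contribution $\qst{1}(2)\qst{2}(2)-\qst{2}(3)\qst{1}(1)$ equals $-Q(1)^2Q(2)$, while the matching ``good'' contribution $\qst{1}(2)\qst{2}(2)-\qst{1}(3)\qst{2}(1)$ equals $+Q(1)^2Q(2)$, so the two cancel exactly and the surviving positivity is supplied by the diagonal and by boundary terms carrying powers of $\lambda$ through the $c_y$. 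This suggests carrying out a summation by parts in the indices $y,z$ --- exactly in the spirit of \cite{johnson17} and \cite{wang3} and of case~(c) in the proof of Theorem~\ref{thm:q2pt} --- so as to pair each negative term with a dominating positive one, with the uncancelled boundary contribution reducing to the $x=1$ inequality $\lambda Q(1)^2\geq 2Q(2)$. The difficulty is that the telescoping used in Theorems~\ref{thm:q2pt} and~\ref{thm:qgeom} relied on closed forms for $\qst{y}$ (the binomial identity (\ref{eq:exact}) when $Q$ is supported on $\{1,2\}$, and the binomial-transform characterisation of \cite{karlin3} when $Q$ is geometric); for a general log-concave $Q$ no such closed form is available, and one would instead need a refinement of (\ref{eq:tech}) controlling the joint behaviour of $\qst{y}$ and $\qst{z}$ across the shifts $x-1,x,x+1$ quantitatively enough to execute the Abel summation. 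Producing that refinement --- equivalently, a clean total-positivity statement for the array $\qst{y}(x)$ strong enough to survive the Poisson mixing --- is precisely why the statement is offered here only as a conjecture.
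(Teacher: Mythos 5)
You were asked about Conjecture~\ref{conj:lcconj}, and it is worth saying up front that the paper offers no proof of it: the authors only show the condition $\lambda Q(1)^2\geq 2Q(2)$ is necessary (the computation leading to (\ref{eq:nec})) and list circumstantial evidence. So there is no paper argument to match yours against, and the honest verdict on your proposal is the one you reach yourself in your last paragraph: it is not a proof. The partial steps you do carry out are correct. The equivalence of the $x=1$ log-concavity inequality with $\lambda Q(1)^2\geq 2Q(2)$ is exactly the paper's derivation of (\ref{eq:nec}) from (\ref{eq:nec2}); the diagonal terms are indeed nonnegative because each convolution power $\qst{y}$ inherits log-concavity from $Q$; the off-diagonal terms $\qst{y}(x)\qst{z}(x)-\qst{y}(x+1)\qst{z}(x-1)$ with $y\leq z$ are nonnegative by the Keilson--Sumita inequality (\ref{eq:tech}) combined with log-concavity of $\qst{z}$; and your sample cancellation at $(y,z)=(1,2)$, $x=2$ checks out ($-Q(1)^2Q(2)$ against $+Q(1)^2Q(2)$). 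But note that all of this uses only log-concavity of $Q$: the hypothesis involving $\lambda$ enters your argument solely through the $x=1$ case.

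The genuine gap is therefore the entire content of the conjecture: for $x\geq 2$ the relabelled sum $\sum_{y<z}c_yc_z\bigl(\qst{y}(x)\qst{z}(x)-\qst{z}(x+1)\qst{y}(x-1)\bigr)$, with $c_y=\lambda^y/y!$, is of indeterminate sign, and you supply no mechanism by which $\lambda Q(1)^2\geq 2Q(2)$ --- which can act only through the relative sizes of the weights $c_y$ --- forces it to be dominated by the diagonal and the ``good'' off-diagonal contributions. Your diagnosis of why this step is hard is accurate and is consistent with what the paper actually proves: in Theorem~\ref{thm:q2pt} the analogous summation-by-parts is executed only because the closed form (\ref{eq:exact}) is available when $Q$ is supported on $\{1,2\}$, and in Theorem~\ref{thm:qgeom} only because the geometric case reduces to the binomial-transform result of \cite{karlin3}; for a general log-concave $Q$ no substitute for these identities is known, which is precisely why the statement is posed as a conjecture, with \cite{johnson17}, \cite{wang3} and \cite{keilson2} flagged as possible tools. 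In short, your reduction and your identification of the obstruction are sound and would serve as reasonable supporting discussion, but the conjecture remains unproved by your proposal.
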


The condition $\lambda Q(1)^2\geq 2Q(2)$
is, of course, necessary; recall the argument
leading to equation~(\ref{eq:nec}) above.

In closing, we list some known results
that are related to this conjecture and 
may be useful in proving (or disproving) it:
\begin{enumerate}
\item Theorem~2.3 of Steutel and van Harn \cite{steutel2} shows that,
if $\{i Q(i)\}$ is a decreasing sequence, then 
CPo$(\lambda,Q)$ is a unimodal distribution
(recall that log-concavity implies unimodality).
Interestingly, the same condition 
provides a dichotomy of results
in compound Poisson approximation
bounds as developed in \cite{barbour5}:
If $\{i Q(i)\}$ is decreasing the bounds are
of the same form and order as in
the simple Poisson case, while if it is not
the bounds are much larger.
\item Theorem~3.2 of Cai and Willmot \cite{cai} 
shows that if $\{Q(i)\}$ is decreasing
then the distribution function of the compound Poisson 
distribution CPo$(\lambda,Q)$ 
is log-concave.
\item A conjecture similar to Conjecture \ref{conj:lcconj} 
is that, for log-concave $Q$, if CPo$(\lambda,Q)$ is log-concave,
then so is CPo$(\mu,Q)$,
for all $\mu \geq \lambda$. Theorem~4.9 of 
Keilson and Sumita \cite{keilson2} proves 
the related result that, if
$Q$ is log-concave, then, for any $n$,
the ratio,
$$ \frac{ \CPi_{\lambda}(n)}{\CPi_{\lambda}(n+1)} \;\;
\mbox{ is decreasing in $\lambda$.}
$$
\end{enumerate}

\newpage

\section*{Acknowledgement}
We wish to thank Z.\ Chi for sharing
his (unpublished) compound binomial 
counter-example mentioned in 
equation~(\ref{eq:chi})
in the introduction.

\appendix

\section*{Appendix}

\begin{proof}{\bf of Lemma~\ref{lem:moments}}
Recall that, as stated in properties $(ii)$ and~$(iii)$ 
in the beginning of Section~\ref{sec:comppoi},
$Q$ has finite moments of all orders, and 
that the $n$th falling factorial moment
of any ultra log-concave random variable $Y$ 
with distribution $R$ on ${\mathbb Z}_+$ is
bounded above by $(E(Y))^n$. Now for an arbitrary
ultra log-concave distribution $R$, define
random variables $Y\sim R$ and $Z\sim C_Q R$.
If $r_1,r_2,r_3$ denote the first three moments
of $Y\sim R$, then,
\begin{eqnarray} 
E(Z^3)
&=&
	q_3r_1 + 3q_1q_2 E[(Y)_2] + q_1^3 E[(Y)_3]
	\nonumber\\
&\leq&
	q_3r_1 + 3q_1q_2r_1^2 + q_1^3r_1^3.
	\label{eq:3rdmomid}
\end{eqnarray}
Since the map $U_\alpha$ preserves ultra log-concavity
\cite{johnson21}, if $P$ is ultra log-concave then 
so is $R = U_{\alpha} P$, so that (\ref{eq:3rdmomid}) 
gives the required bound for the third moment of
$W_\alpha$, upon noting that the mean of
the distribution $U_\alpha P$ is equal to $\lambda$.

Similarly, size-biasing preserves ultra log-concavity; 
that is, if $R$ is ultra log-concave, then so is $R^{\#}$, since 
$R^{\#}(x+1)(x+1)/R^{\#}(x) = (R(x+2) (x+2)(x+1))/(R(x+1) (x+1))
= R(x+2) (x+2)/R(x+1)$ is also decreasing.
Hence, $R'=(U_\alpha P)^{\#}$ is ultra log-concave, 
and (\ref{eq:3rdmomid}) applies in this case as well.
In particular, noting that the mean of 
$Y'\sim R'= (U_\alpha P)^{\#}=R^{\#}$
can be bounded in terms of the mean of $Y\sim R$ as,
$$E(Y')=\sum_x x\frac{(x+1)U_\alpha P(x+1)}{\lambda}
=\frac{E[(Y)_2]}{E(Y)}\leq\frac{\lambda^2}{\lambda}=\lambda,$$
the bound (\ref{eq:3rdmomid}) yields
the required bound for the third 
moment of $V_\alpha$.
\end{proof}

\end{document}